
\documentclass[12pt]{article}

\usepackage{scicite}
\usepackage[bookmarks=false]{hyperref}
\usepackage{times}
\usepackage{bbm}
\usepackage{amsthm}
\usepackage{amssymb,amsfonts}
\usepackage{graphbox}
\usepackage{graphicx}
\usepackage{float}
\usepackage{extarrows}
\usepackage{tikz}
\usepackage{xcolor}
\usepackage{subcaption}

\newcommand{\E}{\mathbb{E}}

\newcommand{\dis}{\displaystyle}


\usepackage{booktabs} 

\usepackage{color}

\newtheorem{theorem}{Theorem}

\newtheorem{corollary}{Corollary}
\newtheorem{proposition}{Proposition}

\newtheorem{definition}{Definition}

\topmargin 0.0cm
\oddsidemargin 0.2cm
\textwidth 16cm 
\textheight 21cm
\footskip 1.0cm


\newenvironment{sciabstract}{%
\begin{quote} \bf}
{\end{quote}}


\title{Predictability limit of partially observed systems}


\author
{Andr\'es Abeliuk,$^{1}$ Zhishen Huang,$^{2}$ Emilio Ferrara,$^{1\ast}$ Kristina Lerman,$^{1\ast}$\\
\\
\normalsize{$^{1}$Information Sciences Institute, University of Southern California, Marina del Rey, CA, 90292, USA}\\
\normalsize{$^{2}$University of Colorado Boulder, Boulder, CO, 80302, USA}\\
\\
\normalsize{$^\ast$To whom correspondence should be addressed; E-mail: emiliofe@usc.edu,  lerman@isi.edu}
}


\date{}


\begin{document} 


\baselineskip24pt


\maketitle


\begin{sciabstract}
Applications from finance to epidemiology and cyber-security require accurate forecasts of dynamic phenomena, which are often only partially observed. 
We demonstrate that a system's predictability degrades as a function of temporal sampling, regardless of the adopted forecasting model. We quantify the loss of predictability due to sampling, and show that it cannot be recovered by using external signals. We validate the generality of  our theoretical findings in  real-world partially observed systems representing infectious disease outbreaks, online discussions, and software development projects. 
On a variety of prediction tasks---forecasting new infections, the popularity of topics in online discussions, or interest in cryptocurrency projects---predictability irrecoverably decays as a function of sampling, unveiling fundamental predictability limits in partially observed systems.
\end{sciabstract}

\newpage
\section*{Introduction}

Forecasting complex dynamic phenomena --- from influenza outbreaks to public opinions, stock market, and cyberattacks --- is central to many policy and national security applications~\cite{Vespignani09}. Prediction is also the standard framework in evaluating models of complex systems learned from data~\cite{Hofman2017}. Time series forecasting, which underpins popular models of dynamic phenomena,  represents a process as a sequence of observations (discrete or continuous numbers of events) at regular time intervals. After learning parameters from past observations, the models can be used to predict future observations \cite{chatfield2000time}. Forecasting models based on  stochastic and self-exciting point processes, autoregressive and hidden Markov models, have been developed to predict crime~\cite{short2008statistical,mohler2011self}, social unrest~\cite{EMBERS}, terrorism~\cite{raghavan2013hidden}, epidemics~\cite{scarpino2019predictability}, human mobility~\cite{Song1018Science}, personal correspondence~\cite{malmgren2009universality}, online activity~\cite{hogg2012social,stoddard2015popularity}, dynamics of ecological systems~\cite{garland2014model,garland2018anomaly} and more~\cite{sapankevych2009time}. 

A fundamental challenge to modeling efforts is the fact that complex systems are seldom fully observed. For example, when estimating opinions in a social system, it is not practical nor feasible to interview every individual in the population; instead, polling is used to elicit responses from a representative sample of a population. When social media is used as a proxy of opinions, it is similarly impractical to collect all relevant posts; instead, a (pseudo-random) sample (e.g., the Twitter \textit{Decahose}), is often used. Further biases can emerge when data is deliberately manipulated or deleted, e.g., to obfuscate or censor content or activity~\cite{king2014reverse}. In short, the data used to learn predictive models of complex phenomena often represents a highly filtered and incomplete view. 

How does data loss due to sampling affect the predictability of complex systems and the accuracy of models learned from the data? 
Statisticians have developed a number of approaches to compensate for data loss, including data imputation~\cite{little2019statistical} to fill in missing values and evaluating the representativeness of the sample~\cite{morstatter2013sample,ruths2014social}. Few of these approaches apply to temporal data. To quantify the predictability of dynamic systems, researchers use measures such as autocorrelation and permutation entropy. The former measures similarity between a time series and its own time-lagged versions. Recently, permutation entropy was introduced as a model-free, nonlinear indicator of the complexity of data~\cite{bandt2002entropy,weighted_Perm_Entropy}.  Permutation entropy represents the complexity of a time series through statistics of its ordered sub-sequences, also known as motifs, and has been adopted to model predictability of ecological systems~\cite{pennekamp2018intrinsic,garland2014model} and epidemic outbreaks~\cite{scarpino2019predictability}. However, the impact of sampling on these measures of predictability of complex systems is not known. 

As the first step towards addressing this question, we model incomplete observation as a stochastic sampling process that selects events at random with some probability $p$ and drops the remaining events from observations of a system. This allows us to theoretically characterize how 
sampling decreases the autocorrelation of a time series. 
We then empirically show
that sampling also reduces the predictability of a dynamic process according to both autocorrelation and permutation entropy. Moreover, the loss of predictability cannot be fully recovered from some external signal, even using data highly correlated with the original unsampled process.
As a result, 
forecasts made by autoregressive models may be no better than predictions of simpler, less accurate models that assume independent events.
We validate these findings with both synthetic and real-world data representing complex social and techno-social systems. 
Without any modeling assumptions on the data, we show how sampling systematically degrades the predictability of these systems.

Researchers increasingly predict complex systems and social network dynamics \cite{Vespignani09,rand2011dynamic,sekara2016fundamental,Hofman2017} to learn the principles of human and machine behavior~\cite{lazer2009computational,rahwan2019machine}. Practitioners and lawmakers alike often base their decisions on such insights~\cite{athey2017beyond,watts2017should}, including for public health~\cite{blumenstock2015predicting,pananos2017critical} and public policy \cite{johnson2016new,deville2016scaling,bail2018exposure,scheufele2019science}. 
As some pointed out~\cite{lazer2014parable,shiffrin2016drawing}, however,  caution should be used when drawing conclusions from incomplete data. 
Sampling, even random sampling, 
distorts the observed dynamics of a process, reducing its predictability.
We formalize and quantify this common, yet understudied, source of bias 
in partially observed systems. 

\section*{Results}
\subsection*{Model}
Consider a dynamic process generating events, for example, social media posts mentioning a particular topic, or newly infected individuals during an epidemic. We can represent the process as a time series of event counts, $X=[X_1,X_2,\ldots,X_T]$, each entry representing the number of observations of $X$ at time $t$.
We refer to this time series as the \textit{ground truth signal}.

Observers of this process may not see all events. Twitter, for example, makes only a small fraction ($\leq10\%$) of messages posted on its platform programmatically available. Similarly, hospitals may delay reporting new cases of a disease or  under-count them altogether when, for various reasons, people do not seek medical help after getting sick. We refer to the time series of observed events $Y=[Y_1,Y_2,\ldots,Y_T]$ as the \textit{observed signal}. Intuitively, $Y$ represents a sample of events present in the ground truth signal $X$.

\begin{figure}[tb]
\centering
\includegraphics[trim={0 1.5cm 0 1.5cm}, width=0.95\columnwidth]{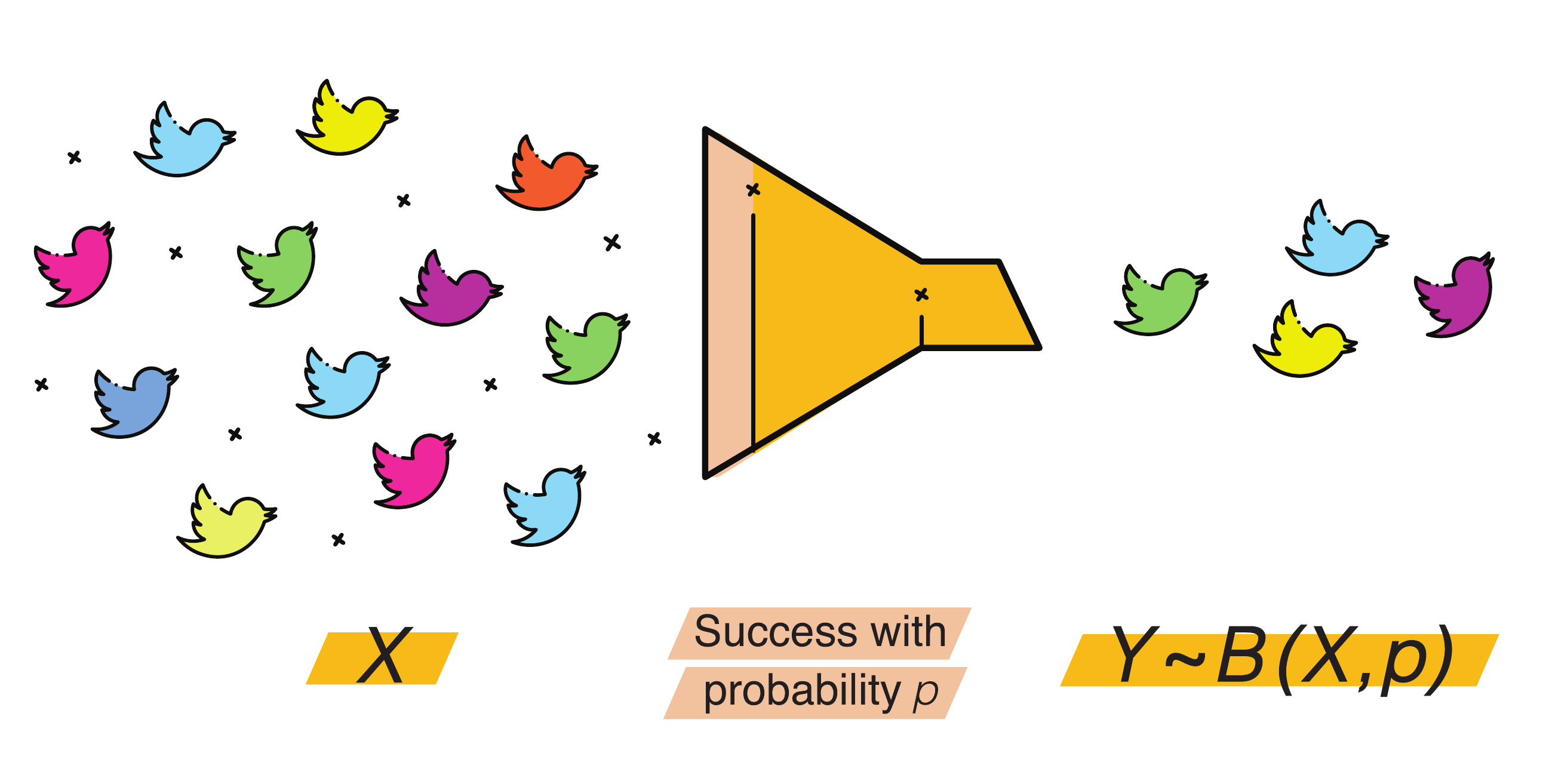}
\caption{
Sampling paradigm as a representation of a partially  observed dynamic process. Here,
$X$, the ground truth signal, represents the actual events, e.g., Twitter
posts mentioning a particular topic; 
$Y$ represents the observed subset of events. The funnel illustrates the sampling process. 
The probability of an event being observed is $p$. The Binomial distribution $B(X,p)$ is used to model the observed signal $Y$.
}
\label{fig:sketch}
\end{figure}

We model partial observation as a stochastic sampling process, where each event has some probability to be observed, independent of other events. This allows us to formalize how the time series of the ground truth and the observed signals are related. Figure~\ref{fig:sketch} illustrates this paradigm. 
\begin{definition}
We define sampling rate $p \in[0,1]$ as the percentage of events that are preserved by the observation process.
Let $X$ and $Y$ be two time series related by
$$Y \sim B([X],p), $$
where $B([X],p)$ is a Binomial process with 
$[X]$ trials, each with success probability $p$.
\end{definition}

The factors driving the system may also produce some external events that may help predict the 
observed system. For example, rising temperatures associated with climate change may help better forecast epidemics that are made more virulent by changes in climate. Similarly, news reports may be associated with increased social media posts on specific topics, since both are driven by world events. Temperatures and news reports may provide important signals for predicting future events.
\begin{definition}
We define the \textit{external signal} as the time series $S=S_1,S_2,\ldots,S_T$ that may provide information about the ground truth signal.
\end{definition}

\subsection*{Quantifying the Loss of Predictability}
Researchers have devised measures of predictability of complex systems. At the simplest level, autocorrelation captures how well a time series representing a complex system is correlated with its own time-lagged versions. This indicator of predictability is popular in finance~\cite{Lim2013stockindPrediction}.
In ecology and physics, permutation entropy 
is used to measure predictability~\cite{Bandt2002permutation,garland2014model}. Permutation entropy (PE) captures the complexity of a time series through statistics of its ordered sub-sequences, or motifs  (see \textit{Materials \& Methods}).  
The higher the permutation entropy, the more diverse the motifs, which in turn renders the time series less predictable.
Permutation entropy was shown to be strongly related to Kolmogorov-Sinai (KS) entropy~\cite{politi2017quantifying},  a theoretical measure quantifying the complexity of a dynamical system.  KS is not easy to reliably estimate from data; however, for one dimensional time-series, KS and permutation entropy are known to be equivalent under a variety of conditions~\cite{bandt2002entropy}.
Using different forecasting models, Garland et al.~\cite{garland2014model} demonstrated an empirical correlation between predictability of the models and 
permutation entropy~\cite{weighted_Perm_Entropy}.
Since then, PE has been used as a model-free indicator of predictability of infectious disease outbreaks~\cite{scarpino2019predictability}, human mobility~\cite{Song1018Science}, ecological systems~\cite{pennekamp2018intrinsic}, and anomaly detection in paleoclimate records~\cite{garland2018anomaly}.
Besides autocorrelation and PE, we also use prediction error as a measure of predictability~\cite{garland2014model}. 
However, since prediction error depends on the forecasting model, we explore it in detail only with synthetic data 
(\textit{SI}, \textit{Synthetic Data Experiments}).

We show that sampling reduces predictability of a signal, and the 
more data is filtered out, the less predictable the signal becomes. The loss of predictability cannot be recovered using an informative external signal, even if it is highly correlated with the original ground truth signal. We  develop a framework for quantifying predictability loss due to sampling and validate it empirically using all measures of predictability.

Our main theoretical contribution is an analytical characterization of the covariance matrix of the observed signal $Y$ in terms of the ground truth signal $X$ and the sampling rate $p$ (cf., \textit{Materials \& Methods}, Theorem~\ref{prop::covariance_dropout_quadratic}).  
Theorems and their proofs are presented in the \textit{SI}.
Based on this characterization, we derive two results stating the effects of sampling on the predictability of the observed signal $Y$:

\begin{description}

\item[Decay of autocorrelation of the observed signal.]
The autocorrelation (defined as Pearson correlation between values of the signal at different times) of the observed signal $Y$ decays monotonically at lower sampling rates (Corollary~\ref{coro::autocorrelation}, \textit{Materials \& Methods}).

\item[Decay of covariance with the external signal.]
The correlation between the observed and external signals degrades linearly at lower sampling rates (Corollary~\ref{coro::covariance}, \textit{Materials \& Methods}).
\end{description}

Specifically, to quantify the impact of sampling on the predictability of a signal, we first derive the autocorrelation of the observed signal as a function of the sampling rate $p$ (cf., Corollary~\ref{coro::auto_corr}, \textit{Materials \& Methods}).   When $p=1$ (i.e., complete observation), we recover the autocorrelation of the ground truth signal $X$. At lower sampling rates, the autocorrelation decays as postulated above. 
In parallel, we demonstrate empirically that sampling degrades predictability as measured using permutation entropy. 

A forecasting model may compensate for the loss of predictability by leveraging an informative external signal. For example, auto-regressive forecasting models allow for additional covariates to improve predictions~\cite{box1975intervention}.
However, according to our second result, 
predictability cannot be fully recovered with an external signal, even one that is highly correlated with the ground truth signal. 



\subsection*{Empirical Results}
\label{sec:empirical}
We show that sampling irreversibly degrades the predictability of real-world complex systems, studying three phenomena: disease outbreaks, online discussions, and software collaborations.  Sampling reduces predictability according to both autocorrelation and permutation entropy measures, and the observed decay of autocorrelation agrees with theoretical predictions. 

Predictability cannot be fully recovered using an informative external signal.
In addition to co-variance, we use \textit{mutual information} (MI) to measure the shared information between the external and the observed signals~\cite{leung1990information}.  
Mutual information quantifies the reduction in uncertainty about one random variable due to the presence of another~\cite{delsole2004predictability}, and like PE it captures the non-linearities in the data that covariance cannot measure. We empirically find  that sampling reduces both the covariance and MI with the external signal.

\paragraph{Epidemics.}
 Scarpino \& Petri~\cite{scarpino2019predictability} used permutation entropy to show that predictability of disease outbreaks decreases over longer time periods, suggesting changes in the behavior of epidemics over time.
Here, we show that 
the predictability of epidemics is also affected by how partially or fully observed the new infections are.


We study eight diseases (Chlamydia, Gonorrhea, Hepatitis A, Influenza, Measles, Mumps, Polio, and Whooping cough), representing each disease outbreak as a time series of the weekly number of reported infections in each US state. 
We find that at lower sampling rates, the permutation entropy (PE), over one-year moving windows (although the results are robust to longer windows, see \textit{SI} Figure~\ref{fig:exp_epidemics_w104}), of the times series increases (Figure~\ref{fig:exp_epidemics} (top-left)) and the autocorrelation decreases (Figure~\ref{fig:exp_epidemics} (top-right)). 
Given that each disease has a different base PE and autocorrelation coefficient (see \textit{SI}, Figures~\ref{fig:epidemics_theo_si} and \ref{fig:exp_epidemics_si} for the absolute values), we normalized the predictability measure of the sampled time series  by the corresponding measure of the 
ground truth time series (i.e., with full information, corresponding to sampling rate $p=1$) to capture the relative change. 
The observed loss of autocorrelation for each disease outbreak at different sampling rates (Figure~\ref{fig:exp_epidemics} (bottom)) agrees well with the theoretical predictions derived by Equation~\ref{eq:autocorr}. Our findings suggest that 
observing only a subset of the new infections distorts the observed dynamics of the disease, making the outbreak less predictable. 

\begin{figure}[H]
\begin{center}
\includegraphics[width=0.49\linewidth]{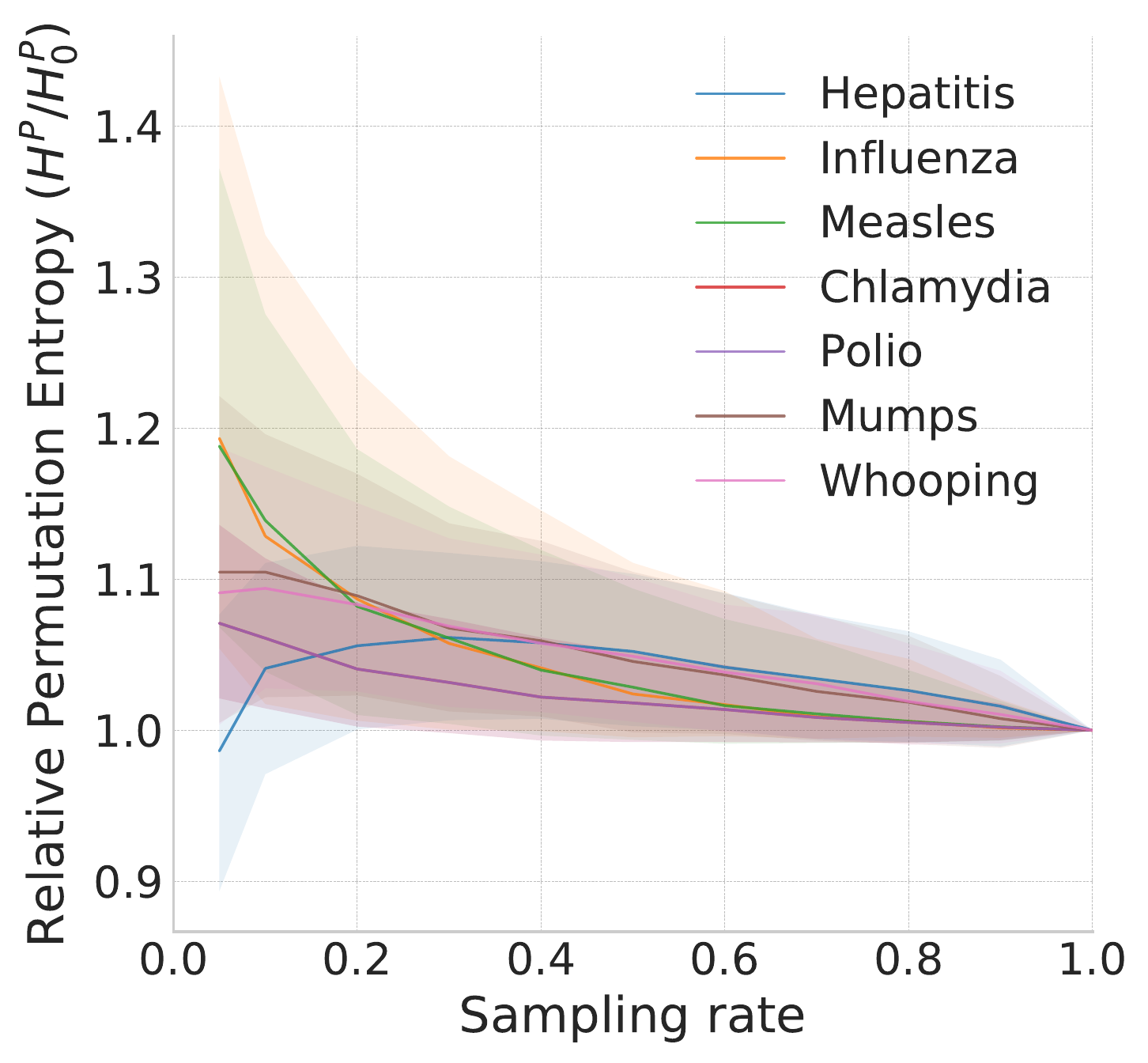}
\includegraphics[width=0.49\linewidth]{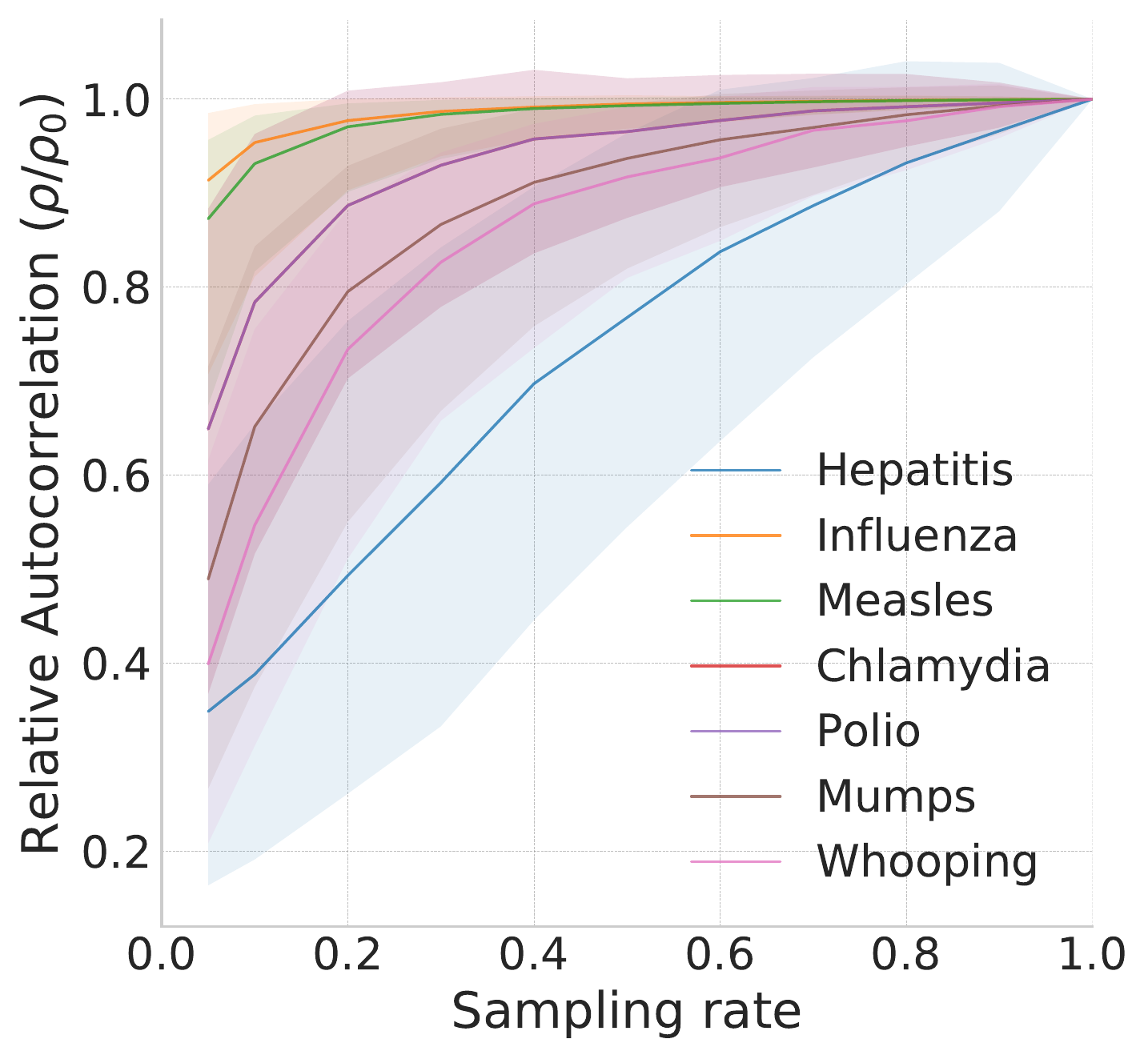} \\
\includegraphics[width=0.49\linewidth]{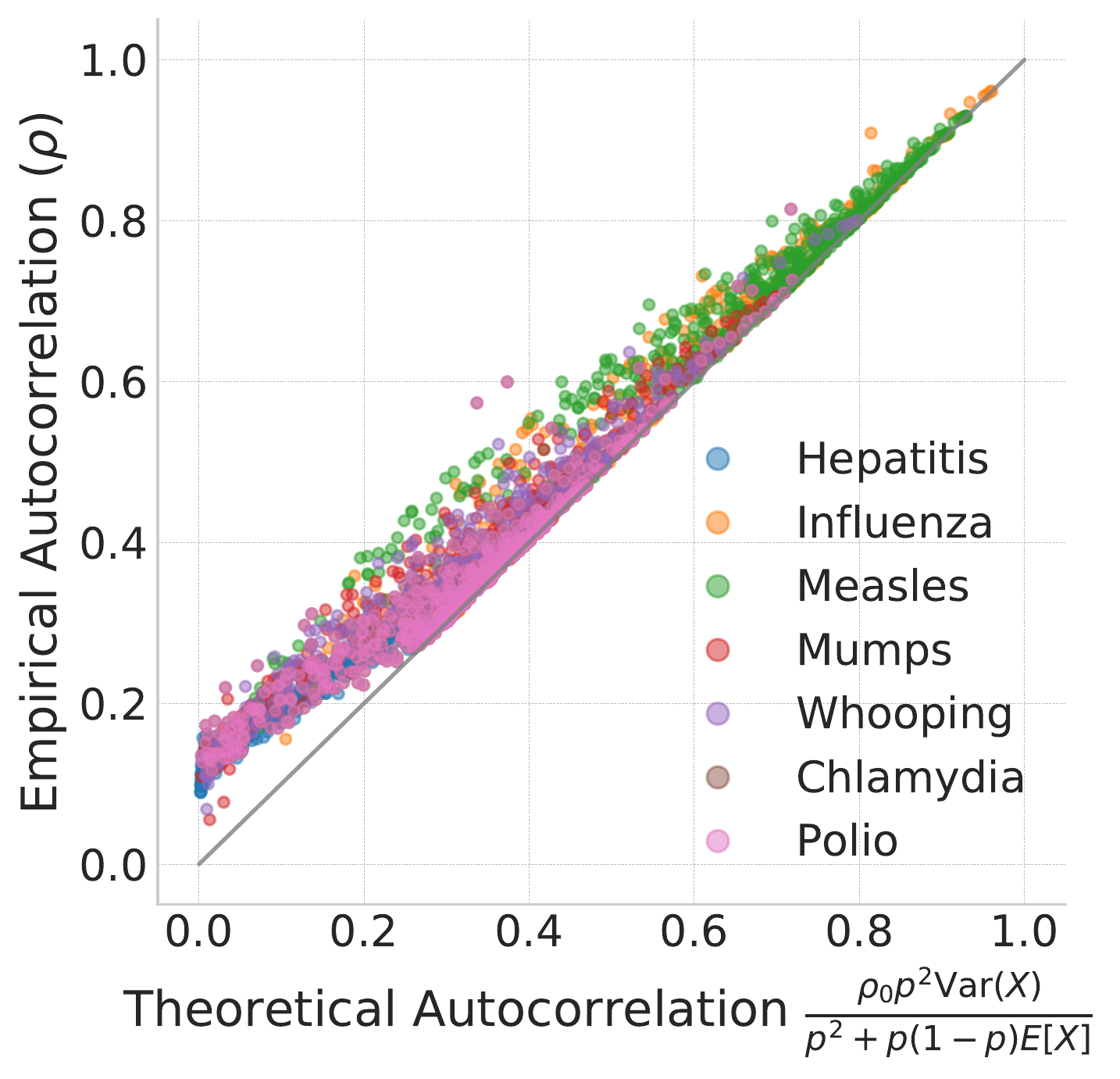}
\caption{Loss of predictability of disease outbreaks due to sampling.   The plots show a decrease in permutation entropy (\textbf{top-left}) and an increase in autocorrelation (\textbf{top-right}) of the outbreak time series for increasing sampling rates. For each of the eight 
diseases, we selected $100$ random one-year time windows and calculated the relative weighted permutation entropy  and autocorrelation for different sampling rates over that window. The solid line represents the median ratio across all states between the original time series and the sampled one; shaded regions mark the inter-quartile ranges.  The \textbf{bottom} plot supports our theoretical results by plotting Equation~\ref{eq:autocorr}  against the empirical autocorrelation of the sampled time series at different sampling rates for each disease. } 
\label{fig:exp_epidemics}
\end{center}
\end{figure}

\begin{figure}[t]
\begin{center}
\centering
\includegraphics[width=0.49\columnwidth]{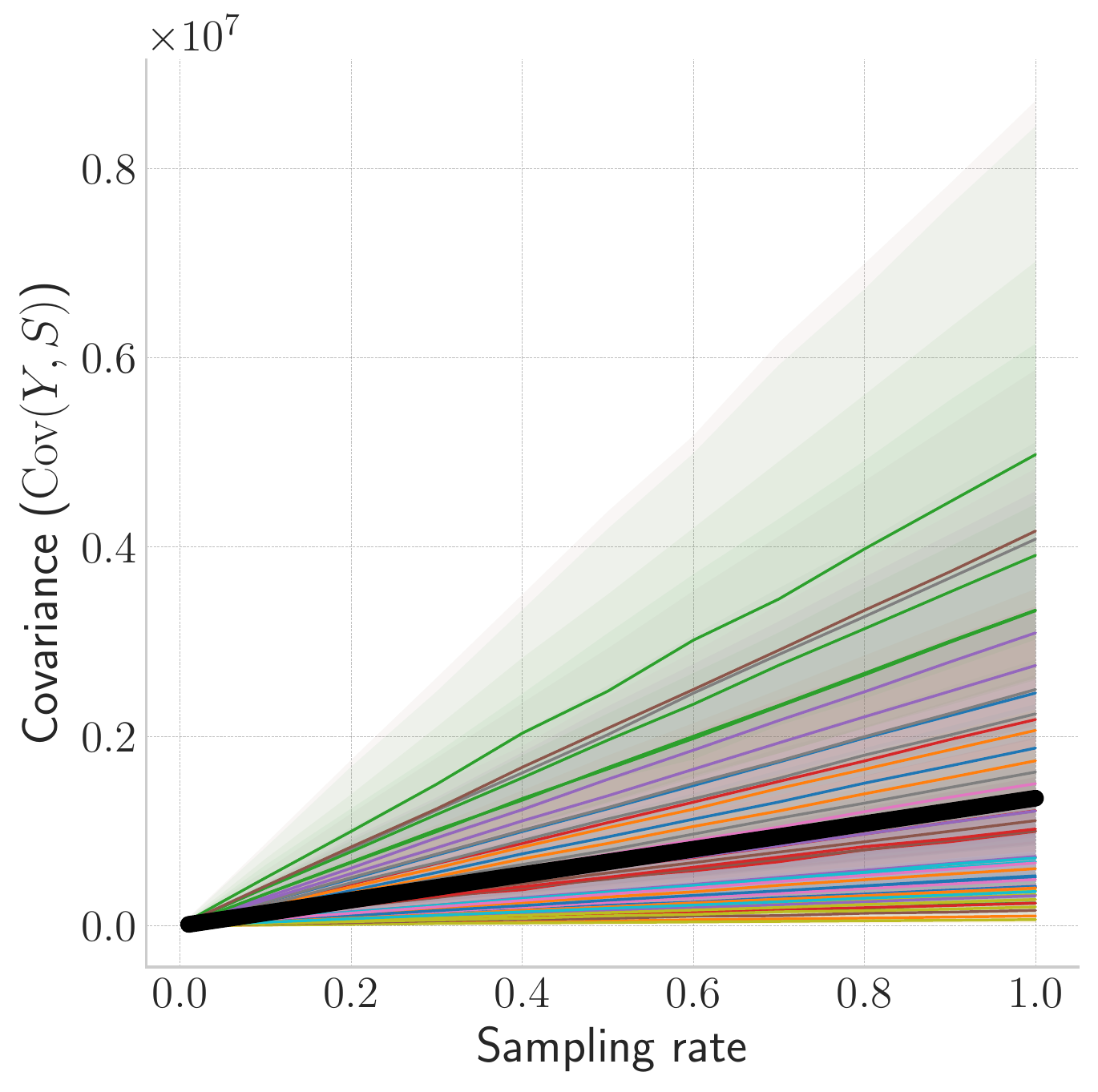}
\includegraphics[width=0.49\columnwidth]{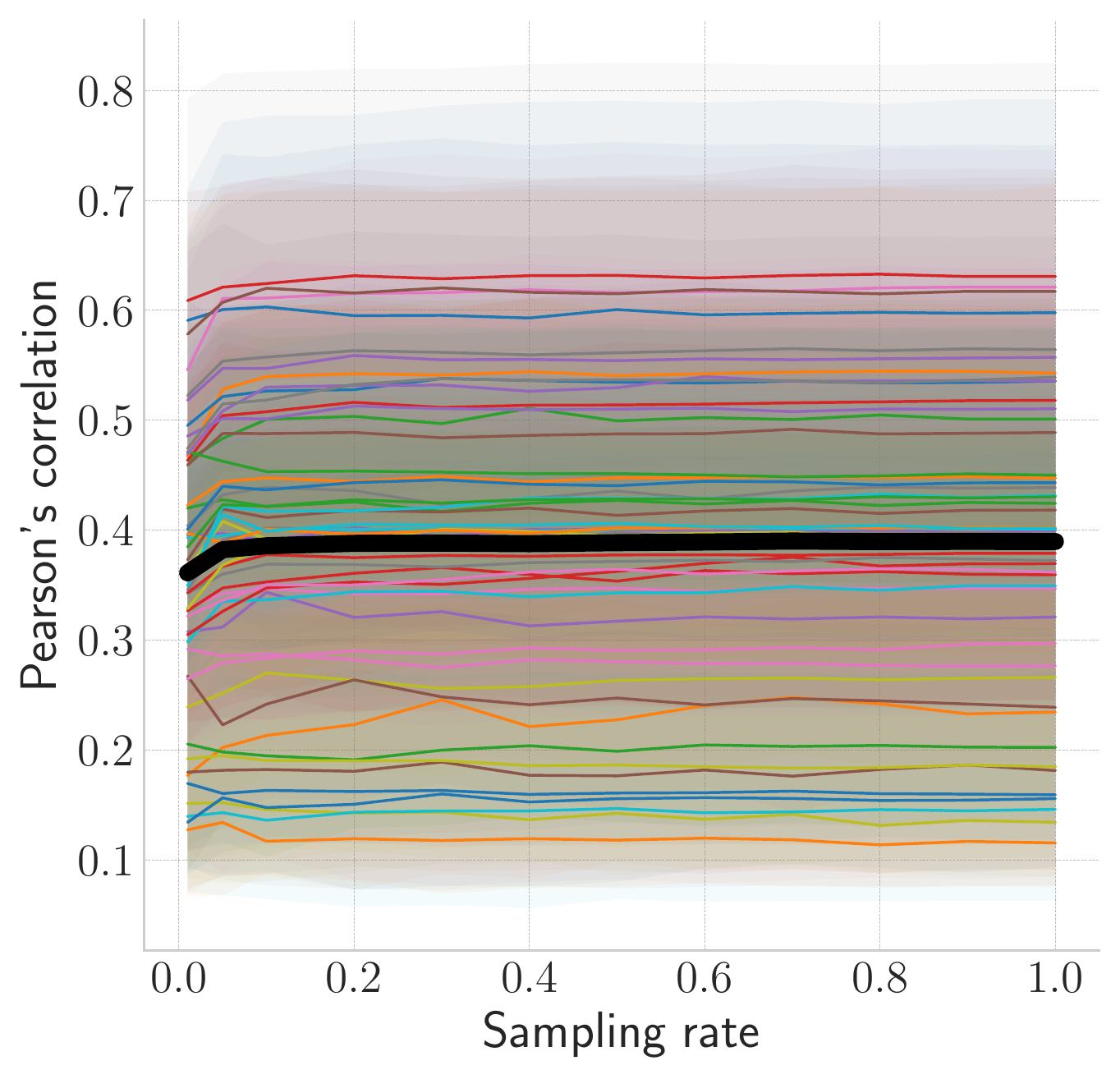}
\caption{Decay of covariance between ground truth and external signals. For each state, we selected $100$ random one-year time windows and calculated the median covariance (\textbf{left}) and Pearson's correlation (\textbf{right}) between Google Flu trends and the influenza activity at different sampling rates. Shaded regions mark the inter-quartile ranges for each state; the solid line represents the average coefficient across all states.}
\label{fig:exp_epidemics_ext_signal}
\end{center}
\end{figure}

Next, we use influenza data to validate Corollary~\ref{coro::covariance}, which states that an external signal becomes less informative (i.e., has lower covariance) about the ground truth data at lower sampling rates. 
As an external signal $S$, we use state-level Google Flu trends~\cite{ginsberg2009detecting}, which estimate influenza activity based on search queries. Figure~\ref{fig:exp_epidemics_ext_signal} (left) shows a linear growth of covariance for each state's influenza time series with increasing sampling rate. However, as depicted on the right plot, there is no observed loss of correlation for lower sampling rates. This is due to the large variance relative to the mean exhibited by influenza activity.  From Theorem~\ref{prop::covariance_dropout_quadratic}, we have that the standard deviation of the observed signal $Y$ is  $$\sigma_Y=\sqrt{p^2\mathrm{Var}(X) + p(1-p)\E[X]}\approx p\; \sigma_X$$ when $\mathrm{Var}(X)\gg\E[X]$.
Then, it follows from Corollary~\ref{coro::covariance}  and  
the definition of Pearson's correlation $\rho$, that 
$$\rho_{Y,S}=\frac{\mathrm{Cov}(Y,S)}{\sigma_Y \sigma_S} \approx \frac{p\; \mathrm{Cov}(X,S)}{ p\; \sigma_X \sigma_S}=\rho_{X,S}.$$
Thus, the linear decrease of covariance is offset by a linear decrease of the standard deviation. However, this is not always the case, as we later show with the cryptocurrency popularity scenario.


Supplementary Figure~\ref{fig:exp_epidemics_mi} shows that mutual information between Google Flu Trends and influenza activity also decreases, suggesting that the former becomes less informative about influenza activity the more it is sampled. 

\paragraph{Social Media.}
Next we consider the problem of predicting social media activity. We analyze the popularity of hashtags on Twitter, defined as the daily number of posts using that hashtag. We focus on the 100 most frequently used hashtags in our  data (cf., \textit{Material \& Methods}), and for each hashtag, we sample from all posts mentioning the hashtag several times at different rates to produce multiple sampled time series.  

\begin{figure}[H]
\centering
\includegraphics[align=c,width=0.49\linewidth]{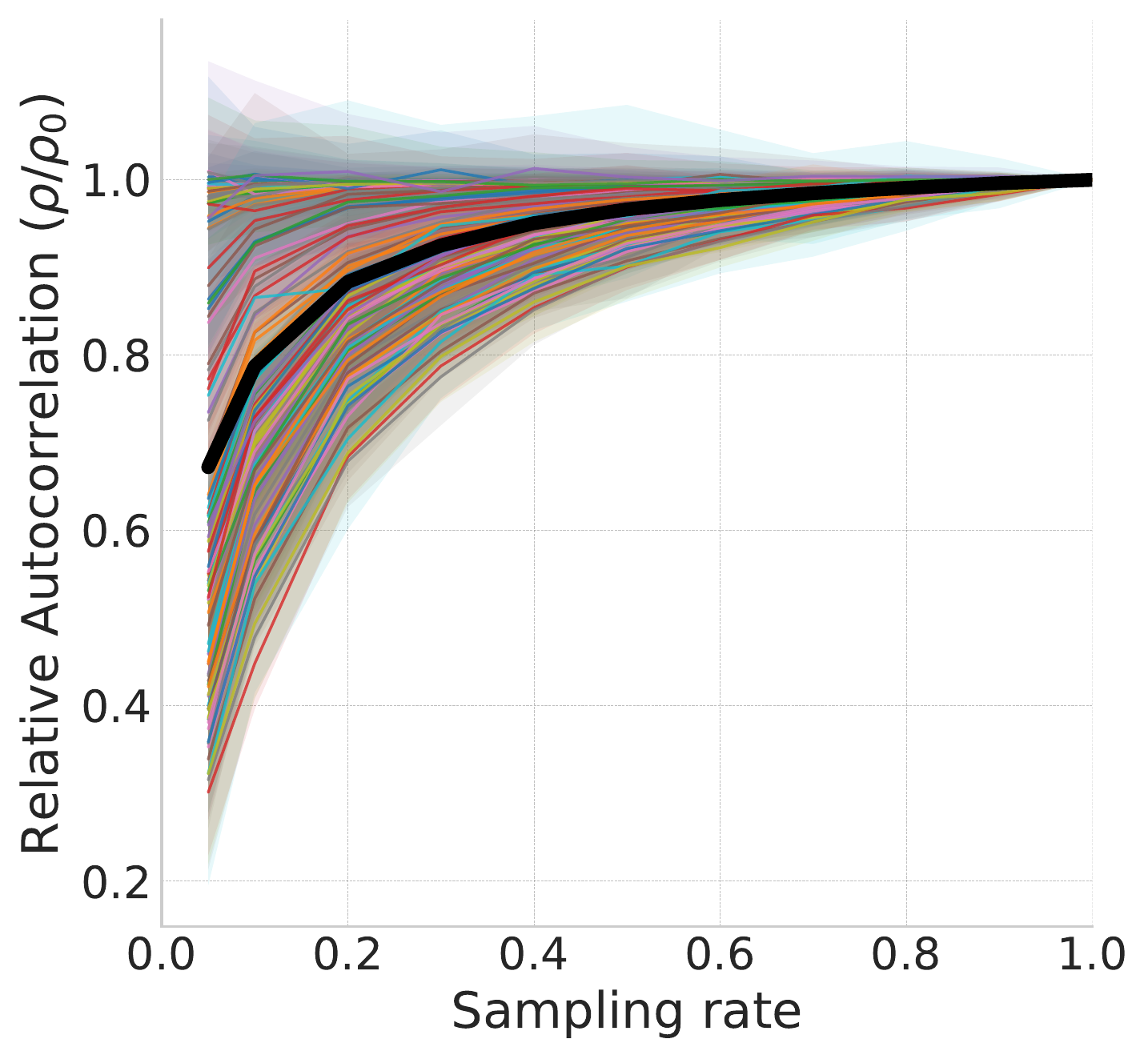}
\includegraphics[align=c,width=0.49\linewidth]{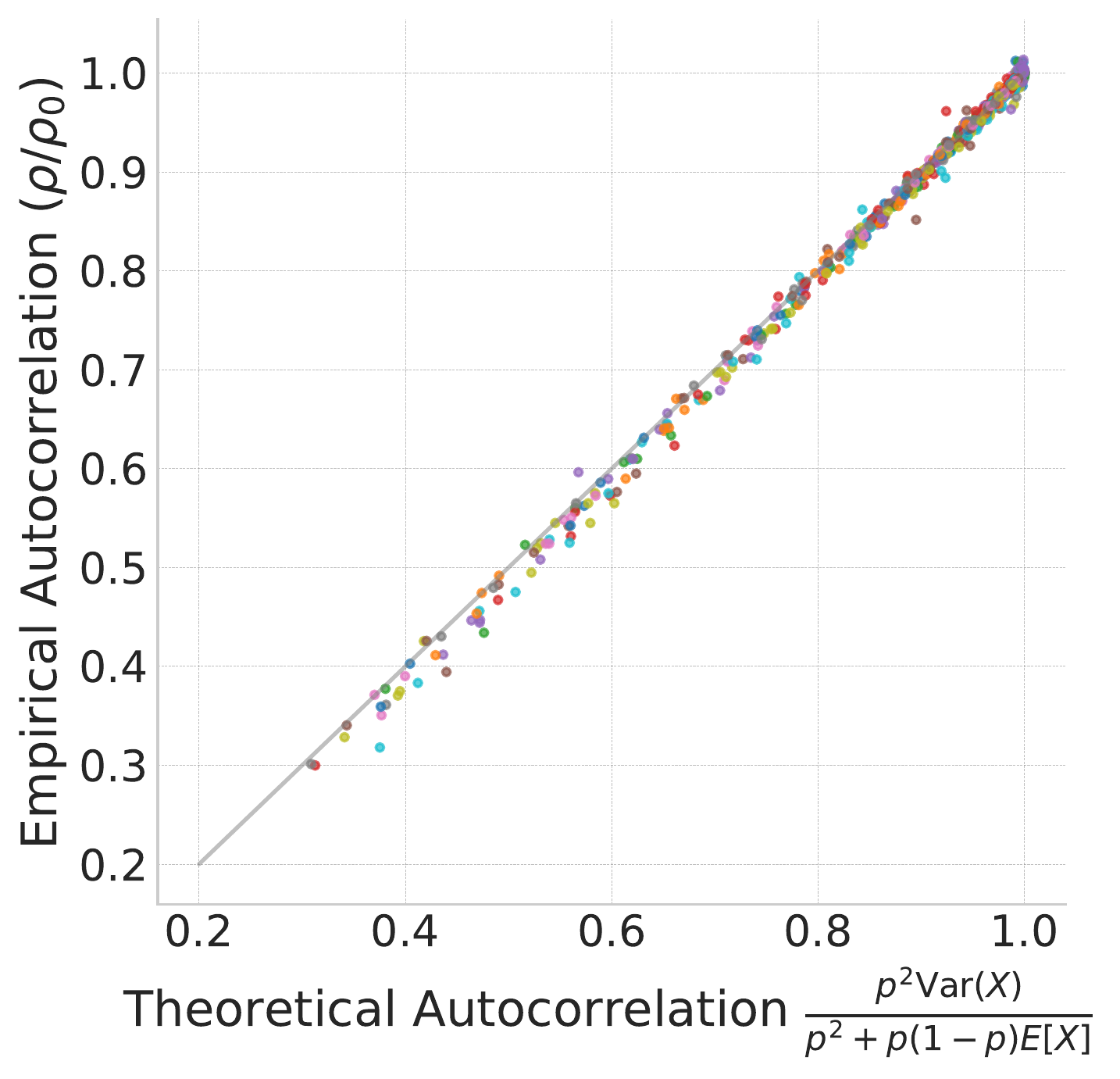}
\caption{Empirical and theoretical effects of sampling on autocorrelation of hashtag popularity. (\textbf{left}) Median autocorrelation relative to the original time series for 100 most popular hashtags; shaded regions mark the inter-quartile ranges; the black line represents the average autocorrelation across all hashtags.
(\textbf{right}) Accuracy of the theoretical prediction according to Equation \ref{eq:autocorr}. 
} 
\label{fig:exp_hashtag}
\end{figure}

Figure~\ref{fig:exp_hashtag} (left) shows the effects of sampling at different rates on the autocorrelation of hashtags' popularity. The plot shows the median autocorrelation loss relative to the original time series. For each ground truth signal, {we found the most significant autocorrelation time lag, which is kept fixed during the down sampling process to calculate autocorrelation at different sampling rates;} then, we plotted the median ratio between the original and sampled autocorrelation.
Although the curvatures are different for each hashtag, all time series are accurately characterized by our theoretical results (Equation~\ref{eq:autocorr}): Figure~\ref{fig:exp_hashtag} (right) shows that the empirical loss of autocorrelation fits the theoretical predictions. Figure~\ref{fig:exp_user} (\textit{SI}) reports the results for the sampled time series of Twitter user activity, measured by the daily number of user's posts.

The loss of predictability is also seen when using  permutation entropy with the same sampling strategy.
Figures~\ref{fig:exp_twitter_entropy1} and \ref{fig:exp_twitter_entropy2} (\textit{SI}) show a clear trend in entropy increase  (i.e., decrease of predictability) for both user activity and popularity of hashtags. The loss of predictability for user activity, for instance, happens in 63\% of the users, while the rest of the cases comprise of time-series whose PE mostly do not change, except for low sampling rates (see Figure \ref{fig:exp_twitter_entropy_cluster} (\textit{SI})).


Note that, in many applications, researchers use data from the Twitter \textit{Decahose} or the \textit{streaming} API, which capture approximately 10\% and 1\% sample of tweets, i.e., sampling rates of 0.1 and 0.01 respectively~\cite{morstatter2013sample}. Considering that, at such low sampling rates, 
relative autocorrelation may be half of its value using the complete Twitter stream (\textit{Firehose}), care should be taken when drawing conclusions from the partially observed system.

\paragraph{Cryptocurrency Popularity.}
We present additional findings regarding loss of correlation between a sampled time series and an external signal.
We study the effect of the price of cryptocurrencies on the adoption of said technology by software developers. To measure interest in the technology behind a cryptocurrency, we track the popularity of Github projects whose description is associated to that cryptocurrency. The four most popular cryptocurrencies during the collection period spanning January 2015 to March 2015 were Bitcoin (BTC), Litecoin (LTC), Monero (XMR) and Ripple (XRP). Some cryptocurencies, like Ethereum, were also popular, but since they were not yet publicly launched, we excluded them from the following analysis.



\begin{figure}[t]
\centering
\includegraphics[align=c,width=0.49\columnwidth]{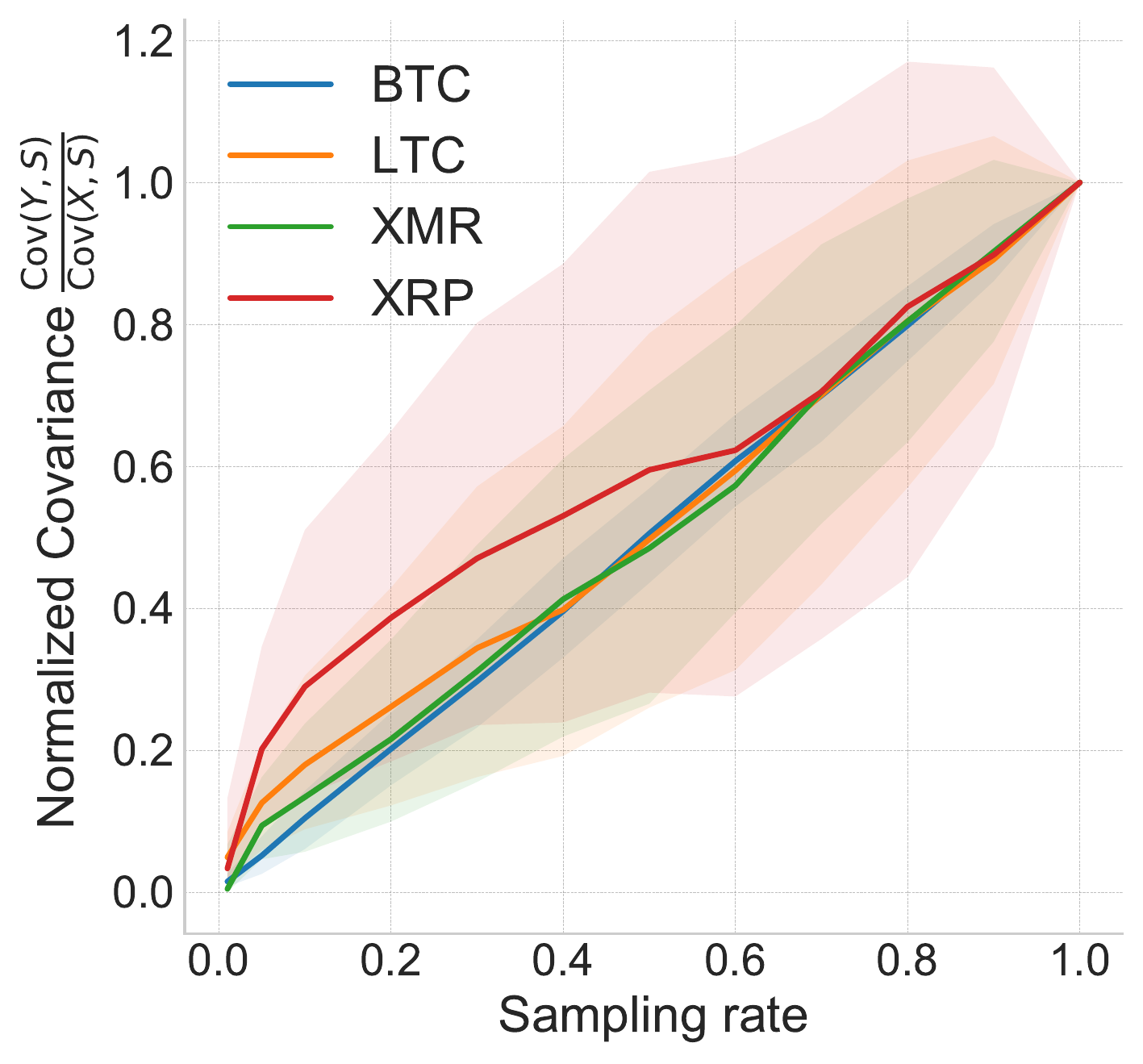}
\includegraphics[align=c,width=0.49\columnwidth]{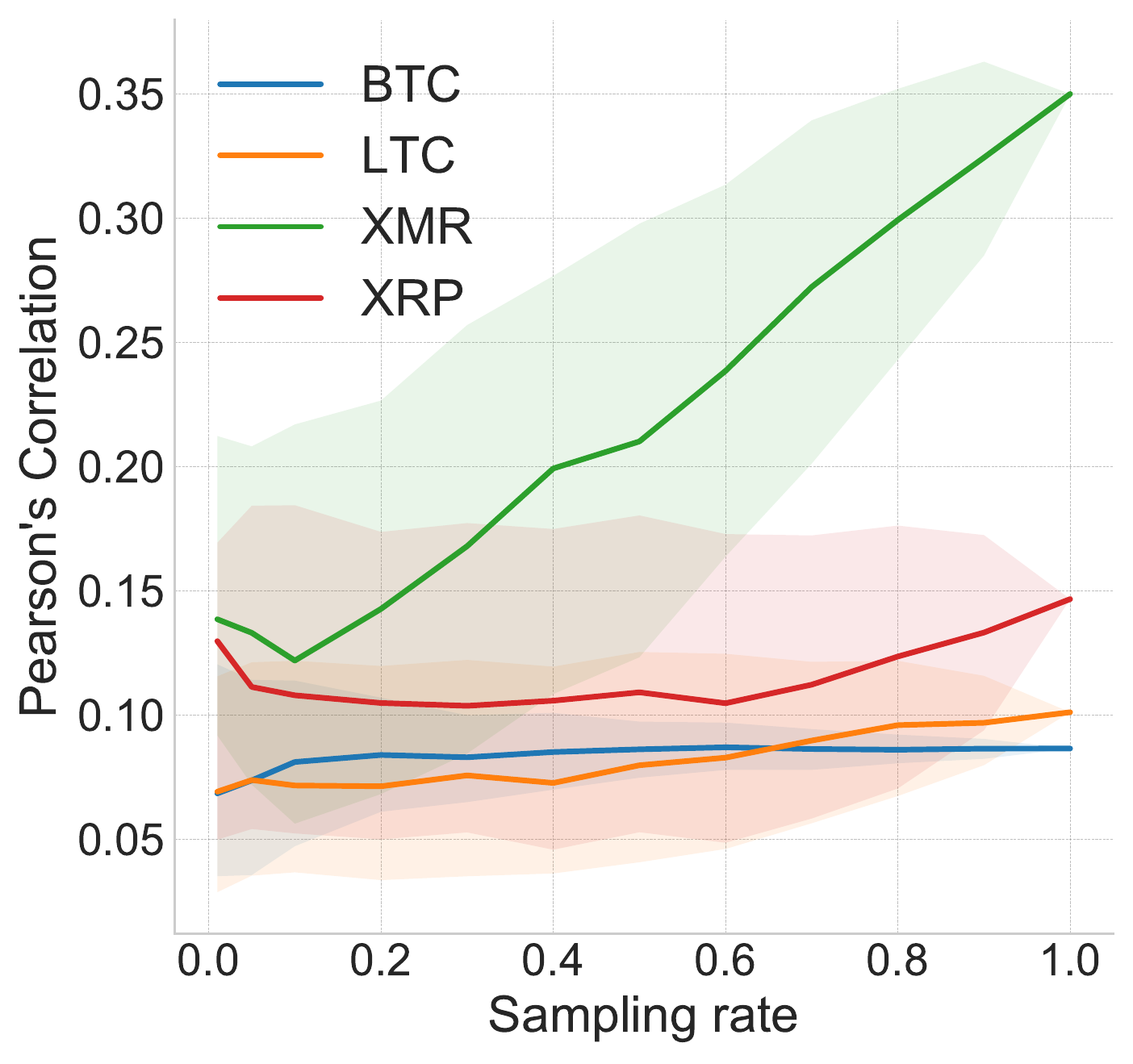}
\caption{Loss of correlation between cryptocurrencies repository popularity and their prices for different sampling rates. Each point is the median Pearson's correlation coefficient over $1000$ samples. Error bars show the standard deviation.
For each cryptocurrency, we calculated over the $1000$ samples, (\textbf{left}) the median  normalized covariance $\frac{\mathrm{Cov}(Y,S)}{\mathrm{Cov}(X,S)}$ and (\textbf{right}) the Pearson's correlation coefficient between the price and the popularity of related Github repositories at different sampling rates. Shaded regions mark the inter-quartile ranges for each coin.}
\label{fig:crypto_ext}
\end{figure}
Figure~\ref{fig:crypto_ext} explores the effect that sampling has on the correlation. The left plot shows a clear decrease in the relative covariance for lower sampling rates, corroborating our theoretical results. As opposed to the behavior of influenza outbreaks (cf., Fig.~\ref{fig:exp_epidemics_ext_signal}), in Figure~\ref{fig:crypto_ext} (Right) we can see that a decay of covariance tends to induce a loss of correlation, especially for those coins with low variance relative to their mean. Supplementary Figure~\ref{fig:crypto_mi}  depicts a decrease in mutual information for BTC and LTC, while the other two coins are independent of the external signal.

\paragraph{Synthetic Data.}
Finally we investigate the impact of sampling on the predictability of synthetic data generated by an auto-regressive process (\textit{SI}, \textit{Synthetic Data Experiments}). 
In addition to autocorrelation and permutation entropy, we measure the error of forecasts made by an auto-regressive model trained on the sampled data. Similar to other metrics that demonstrate a loss of predictability, prediction error grows at lower sampling rates (\textit{SI},  Figure \ref{fig::RMSEvsDropout}). As a result, the forecasts made by auto-regressive models from  data collected at low sampling rates are no more accurate than forecasts made by a Poisson model that assumes independent events.  Sampling further distorts the observed dynamics of the auto-regressive process by introducing heteroskedasticity into the sampled time series. The time-varying variance causes predictions to deteriorate (\textit{SI}, \textit{Synthetic Data Experiments},
Proposition~\ref{prop:garch}).

\subsection*{Materials and Methods}
\paragraph{Permutation Entropy (PE).}
We use 
permutation entropy 
as a model-free measure of predictability of a time series
~\cite{Bandt2002permutation, weighted_Perm_Entropy,garland2014model}. Permutation entropy captures the complexity of a time series via statistics of its ordered sub-sequences of the type $s=[x_t,x_{t+\tau},\ldots,x_{t+(d-1)\tau} ]$, given embedding dimension $d$ and a temporal delay $\tau$. Let $\mathcal{S}_{d,\tau}$ be the collection of all $d!$ permutations $\pi$ of size $d$ and temporal delay $\tau$. For each $\pi\in\mathcal{S}_{d,\tau}$, we determine the relative frequency $P(\pi)$ of that permutation occurring in the time series.
The permutation entropy of order $d\ge 2$ and delay $\tau\ge1$ is defined as
\begin{equation}
\label{defn::permutation_entropy}
    H^{\mathrm{P}}(d,{\tau}) = -\sum_{\pi\in \mathcal{S}_{d,\tau}} P(\pi)\log_2P(\pi)
\end{equation}
We use weighted permutation entropy~\cite{weighted_Perm_Entropy} to lessen the influence of observational noise on the ordinal pattern of the signal, in which weights with respect to a sub-sequence with a certain ordinal pattern are introduced to reflect the importance of ordinal changes in large amplitudes.
Finally, we \textit{normalize} weighted permutation entropy by dividing it by $\log_2(d!)$, log of the number of possible permutations. See \textit{SI}, \textit{Permutation Entropy Criterion}, 
for a formal definition.
To estimate PE of a time series we need to specify the order $d$ and time delay $\tau$. The optimal parameters will depend on the specific properties of the time series, for example, the periodic behavior of the system relates to the delay parameter~\cite{riedl2013practical}. Here, we follow the approach described in \cite{scarpino2019predictability}, which performs a grid search over the pairs $(d,\tau)$, $2\leq d\leq 5$ and $1\leq \tau \leq 7$ searching for the values that minimize $H^{\mathrm{P}}(d,\tau)$. However, for the parameter search, PE is normalized by the number of observed permutations instead of the possible permutations, given that otherwise, $H^{\mathrm{P}}(d,\tau)$ is decreasing as a function of $d$. Finally, the parameters found for each ground truth signal are used to compute the PE of the corresponding sampled time series.

\paragraph{Mutual Information}
Mutual information characterizes the amount of information one random variable contains about another, specifically capturing the reduction in the uncertainty of one random variable due to the knowledge of the other. The mutual information between two random variables is defined as $\dis I(X;Y) = \mathbb{E}_{p(x,y)} \ln\frac{p(X,Y)}{p(X)p(Y)}$.

Here we consider the mutual information between two time series. We calculate the mutual information between two time series with \textit{PyInform} \cite{pyinform_asu}. 

\paragraph{Loss of autocorrelation of the Sampled Signal.}
Our first theoretical result shows that sampling reduces the auto-covariance of the observed signal, i.e., the covariance of the time series $Y$ and its time-lagged version.
\begin{theorem}
\label{prop::covariance_dropout_quadratic}
The time series $X$, $Y$ are related by $Y \sim B([X],p)$, where $B([X],p)$ is a Bernoulli random process with success rate $p$.

The covariance matrices $\boldsymbol{\Sigma}_X$ and $\boldsymbol{\Sigma}_Y$ are related as
\begin{equation}
\label{eqn::covariance_dropout_quadratic}
\boldsymbol{\Sigma}_Y = p^2\boldsymbol{\Sigma}_X + p(1-p)\E[X]\boldsymbol{I}
\end{equation}
where $\boldsymbol{I}$ is the identity matrix.
\end{theorem}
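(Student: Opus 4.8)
The plan is to compute the $(s,t)$ entry of $\boldsymbol{\Sigma}_Y$, i.e.\ $\mathrm{Cov}(Y_s,Y_t)$, by conditioning on the whole ground-truth vector $X=(X_1,\dots,X_T)$ and applying the law of total (co)variance. I would first make the thinning mechanism explicit: conditionally on $X$, write $Y_t=\sum_{i=1}^{X_t} Z_{t,i}$, where the family $\{Z_{t,i}\}$ consists of i.i.d.\ $\mathrm{Bernoulli}(p)$ variables, mutually independent across both $t$ and $i$ and independent of $X$ --- this is precisely the modeling assumption that each event is observed independently with probability $p$. It then follows immediately that $Y_t\mid X\sim B(X_t,p)$, so that $\E[Y_t\mid X]=p\,X_t$ and $\mathrm{Var}(Y_t\mid X)=p(1-p)X_t$.

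For the off-diagonal entries ($s\neq t$) I would use $\mathrm{Cov}(Y_s,Y_t)=\E\!\big[\mathrm{Cov}(Y_s,Y_t\mid X)\big]+\mathrm{Cov}\!\big(\E[Y_s\mid X],\,\E[Y_t\mid X]\big)$. The first term vanishes because, conditionally on $X$, $Y_s$ and $Y_t$ are measurable functions of the disjoint, independent blocks $\{Z_{s,i}\}$ and $\{Z_{t,i}\}$, hence conditionally independent; the second term equals $\mathrm{Cov}(p\,X_s,\,p\,X_t)=p^2\,\mathrm{Cov}(X_s,X_t)=p^2(\boldsymbol{\Sigma}_X)_{s,t}$. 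For the diagonal entries, the law of total variance gives $\mathrm{Var}(Y_t)=\E[\mathrm{Var}(Y_t\mid X)]+\mathrm{Var}(\E[Y_t\mid X])=p(1-p)\,\E[X_t]+p^2\,\mathrm{Var}(X_t)$, which is the scaled ground-truth variance plus the extra additive term $p(1-p)\E[X_t]$. Combining the two cases yields $(\boldsymbol{\Sigma}_Y)_{s,t}=p^2(\boldsymbol{\Sigma}_X)_{s,t}+p(1-p)\,\E[X_t]\,\delta_{st}$; under the (first-order) stationarity convention $\E[X_t]\equiv\E[X]$ used throughout the paper this is exactly $\boldsymbol{\Sigma}_Y=p^2\boldsymbol{\Sigma}_X+p(1-p)\E[X]\boldsymbol{I}$ (without stationarity the correction would be the diagonal matrix $p(1-p)\,\mathrm{diag}(\E[X_1],\dots,\E[X_T])$, which I would flag).

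The only step with genuine content is the vanishing of $\mathrm{Cov}(Y_s,Y_t\mid X)$ for $s\neq t$; everything else is a mechanical application of iterated expectations together with the Binomial mean and variance. That step is also where the hypothesis really bites: if the sampling at different times shared randomness --- a common random sampling rate, or bursty deletions spanning several time steps --- the conditional cross-covariance would be nonzero and the clean $p^2$ scaling of the off-diagonal block would break. I would therefore state the independence-of-thinning assumption at the outset, and as a sanity check re-derive the diagonal term by the event-level route $\mathrm{Var}(Y_t)=\E\big[\sum_i\mathrm{Var}(Z_{t,i})\big]+\mathrm{Var}\big(\sum_i\E[Z_{t,i}]\big)$, recovering the same $p(1-p)\E[X_t]$.
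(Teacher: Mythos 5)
Your proof is correct, and for the off-diagonal entries it matches the paper's argument in substance: both condition on $X$, use the conditional independence of the thinning across time steps to factor $\E[Y_iY_j\mid X]$, and pull out the factor $p^2$. Where you genuinely diverge is the diagonal term. The paper gets $\E[Y^2\mid X]$ by first replacing the Binomial with a normal approximation $\mathcal{N}(Xp,\,Xp(1-p))$ and then running a $\chi^2(1)$ computation on the standardized variable; your route via the law of total variance, $\mathrm{Var}(Y_t)=\E[\mathrm{Var}(Y_t\mid X)]+\mathrm{Var}(\E[Y_t\mid X])$ with the exact Binomial conditional variance $\mathrm{Var}(Y_t\mid X)=p(1-p)X_t$, reaches the same formula in one line and is strictly better: it is exact rather than approximate, so it removes an unnecessary ``without loss of generality'' normal approximation from the argument. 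Your explicit event-level representation $Y_t=\sum_{i=1}^{X_t}Z_{t,i}$ also makes visible the one assumption that actually carries the theorem --- independence of the thinning variables across time --- which the paper uses implicitly when it writes $\E[Y_iY_j\mid X_i,X_j]=\E[Y_i\mid X_i]\,\E[Y_j\mid X_j]$. Two small points of contact with the paper's version: the paper writes its off-diagonal identity with an $\approx$ because it distinguishes the rounded counts $[X]$ from $X$ (relevant only for the synthetic ARIMA data, where counts are real-valued), and your remark that without first-order stationarity the correction term should be $p(1-p)\,\mathrm{diag}(\E[X_1],\dots,\E[X_T])$ rather than $p(1-p)\E[X]\boldsymbol{I}$ is a fair and worthwhile caveat that the paper glosses over.
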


We can use the expression in Theorem~\ref{prop::covariance_dropout_quadratic} to approximate the autocorrelation of the sampled time series $Y$ as a function of the ground truth signal $X$. 
Autocorrelation is defined as Pearson correlation between values of the signal at different times, i.e.,
$\rho_{X_i,X_j} = \frac{\mathrm{Cov}(X_i,X_j)}{\sigma_{X_i}\sigma_{X_j}}.$
For sake of simplicity, 
we assume that the ground truth process is stationary. 
\begin{corollary}
\label{coro::auto_corr}
The autocorrelation of  sampled time series $Y$ is 

\begin{equation}
\label{eq:autocorr}
    \rho_{Y_i,Y_j} \approx \frac{ p^2 \mathrm{Cov}(X_i,X_j)}{p^2\mathrm{Var}(X) + p(1-p)\E[X]}.
\end{equation}
\end{corollary}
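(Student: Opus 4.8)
The plan is to read off both the numerator and the denominator of the Pearson correlation $\rho_{Y_i,Y_j} = \mathrm{Cov}(Y_i,Y_j)/(\sigma_{Y_i}\sigma_{Y_j})$ directly from the matrix identity in Theorem~\ref{prop::covariance_dropout_quadratic}, and then invoke stationarity of the ground truth process to collapse the denominator into a single expression that does not depend on the time indices.

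First I would treat the numerator. For a lag $i\neq j$, $\mathrm{Cov}(Y_i,Y_j)$ is the $(i,j)$ off-diagonal entry of $\boldsymbol{\Sigma}_Y$. Since the correction term $p(1-p)\E[X]\boldsymbol{I}$ in Equation~\ref{eqn::covariance_dropout_quadratic} is diagonal, it contributes nothing off the diagonal, so $\mathrm{Cov}(Y_i,Y_j) = p^2(\boldsymbol{\Sigma}_X)_{ij} = p^2\,\mathrm{Cov}(X_i,X_j)$. This gives the numerator exactly, with no approximation. Next I would treat the denominator: the $(i,i)$ diagonal entry of $\boldsymbol{\Sigma}_Y$ is $\mathrm{Var}(Y_i) = p^2\,\mathrm{Var}(X_i) + p(1-p)\E[X_i]$, hence $\sigma_{Y_i} = \sqrt{p^2\,\mathrm{Var}(X_i) + p(1-p)\E[X_i]}$. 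Here the stationarity hypothesis enters: it lets me replace $\mathrm{Var}(X_i)$, $\E[X_i]$ (and likewise for index $j$) by the common values $\mathrm{Var}(X)$, $\E[X]$, so that $\sigma_{Y_i}\sigma_{Y_j} = p^2\,\mathrm{Var}(X) + p(1-p)\E[X]$. Dividing the numerator by this quantity yields Equation~\ref{eq:autocorr}.

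The computation itself is routine; the only point that deserves care is the origin of the $\approx$ sign. Theorem~\ref{prop::covariance_dropout_quadratic} is an exact identity, so the approximation is attributable solely to the assumption that $X$ is stationary — in the applications $X$ is only approximately stationary over the moving window used, so the per-index moments $\mathrm{Var}(X_i)$, $\E[X_i]$ are only approximately constant in $i$. I would also flag that the formula is meant for nonzero lag $i\neq j$: at $i=j$ one has $\rho_{Y_i,Y_i}=1$ trivially, whereas the right-hand side equals $p^2\,\mathrm{Var}(X)\big/\bigl(p^2\,\mathrm{Var}(X)+p(1-p)\E[X]\bigr) < 1$, which is exactly the damping factor induced by the extra Binomial sampling noise that inflates the marginal variance of $Y$ relative to $X$. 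Finally, as a consistency check, setting $p=1$ gives $\rho_{Y_i,Y_j} = \mathrm{Cov}(X_i,X_j)/\mathrm{Var}(X) = \rho_{X_i,X_j}$, recovering the autocorrelation of the ground truth signal as asserted in the surrounding discussion, and for $p<1$ the added positive term $p(1-p)\E[X]$ in the denominator makes $|\rho_{Y_i,Y_j}| < |\rho_{X_i,X_j}|$, i.e. a strict loss of autocorrelation, which is the statement used by Corollary~\ref{coro::autocorrelation}.
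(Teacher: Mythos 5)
Your proposal is correct and follows essentially the same route as the paper: substitute the off-diagonal and diagonal entries of $\boldsymbol{\Sigma}_Y$ from Theorem~\ref{prop::covariance_dropout_quadratic} into the Pearson-correlation formula, then invoke stationarity to collapse the per-index moments. The only small quibble is your claim that the $\approx$ is attributable \emph{solely} to stationarity: in the paper's own derivation the off-diagonal relation $\mathrm{Cov}(Y_i,Y_j)\approx p^2\,\mathrm{Cov}(X_i,X_j)$ is itself already approximate (because of the integer-part bracket $[X]$ and the normal approximation used for the diagonal), so the approximation has more than one source.
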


\begin{corollary}
\label{coro::autocorrelation}
The magnitude of autocorrelation  $|\rho_{Y_i,Y_j}|$ of the observed signal $Y$, increases monotonically as a function of the sampling rate $p$.
\end{corollary}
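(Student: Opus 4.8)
The plan is to reduce the statement to an elementary one‑variable calculus fact using the closed form for the autocorrelation of $Y$ already furnished by Corollary~\ref{coro::auto_corr}. Write $c := \mathrm{Cov}(X_i,X_j)$, $v := \mathrm{Var}(X)$, and $m := \E[X]$; since $X$ is a nonnegative count process we have $m \ge 0$, while $v \ge 0$ automatically. First, observe that for $p \in (0,1]$ the numerator and denominator in Equation~\ref{eq:autocorr} share a common factor $p$, so the quadratic‑over‑quadratic expression collapses to a linear‑over‑linear one:
$$\rho_{Y_i,Y_j}(p) = \frac{p^2 c}{p^2 v + p(1-p)m} = \frac{p\,c}{p(v-m)+m}.$$
By Theorem~\ref{prop::covariance_dropout_quadratic} the denominator here equals $\mathrm{Var}(Y)/p$, hence is strictly positive whenever $Y$ is non‑degenerate, so the expression is well defined on $(0,1]$ and its sign is that of $c$.

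Next, pass to absolute values and write $|\rho_{Y_i,Y_j}(p)| = |c|\,f(p)$ with $f(p) = \dfrac{p}{p(v-m)+m}$. The key computation is a single application of the quotient rule:
$$f'(p) = \frac{\bigl(p(v-m)+m\bigr) - p(v-m)}{\bigl(p(v-m)+m\bigr)^2} = \frac{m}{\bigl(p(v-m)+m\bigr)^2} \ge 0,$$
the inequality holding because $m = \E[X] \ge 0$. Therefore $f$, and with it $|\rho_{Y_i,Y_j}|$, is non‑decreasing in $p$ on $(0,1]$, strictly increasing whenever $\E[X] > 0$ and $c \ne 0$; letting $p \to 1$ recovers $|\rho_{X_i,X_j}|$, which is thus the maximum. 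The degenerate cases $c = 0$ (constant zero autocorrelation) and $m = 0$ (autocorrelation identically $|c|/v$) make monotonicity hold vacuously, so no extra hypotheses are needed.

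The calculation is routine, so the only points requiring care are bookkeeping ones, and I would flag them explicitly rather than treat them as real obstacles. The first is to notice that the factor $p$ cancels cleanly, turning the apparent ratio of quadratics into a ratio of linear functions — this is what makes $f'$ so transparent. The second, and the subtlest, is that the coefficient $v - m$ multiplying $p$ in the denominator can be negative (an underdispersed ground‑truth signal), so one cannot conclude monotonicity by comparing numerator and denominator growth separately; it is solely the sign of $\E[X]$, which surfaces as the numerator of $f'(p)$, that forces the result, and this is precisely where the nonnegativity of event counts enters. Finally, one should record that the denominator remains positive throughout by identifying it with $\mathrm{Var}(Y)/p$, which also pins down the domain $(0,1]$ on which the claim is made.
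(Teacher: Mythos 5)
Your proof is correct and takes essentially the same route as the paper's: both differentiate the closed-form expression from Corollary~\ref{coro::auto_corr} and reduce positivity of the derivative to $\E[X]\ge 0$. The only differences are cosmetic — you cancel the common factor of $p$ first and work under the stationarity assumption, which turns the computation into the quotient rule on a linear-over-linear function, whereas the paper differentiates $R^2(p)$ keeping $\mathrm{Var}(X_i),\E[X_i]$ and $\mathrm{Var}(X_j),\E[X_j]$ distinct; your observation that $v-m$ may be negative and that only $\E[X]\ge 0$ matters is exactly the point the paper's (messier) derivative formula also hinges on.
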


\begin{corollary}
\label{coro::covariance}
 The covariance between the observed signal $Y$ and an arbitrary external signal $S$
is related to the covariance between the ground truth signal $X$ and the same external signal $S$ by,
\begin{equation}\label{eq:cov_ex}
\mathrm{Cov}(Y,S) =  p\,\mathrm{Cov}(X,S).
\end{equation}
\end{corollary}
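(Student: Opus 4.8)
\medskip

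The plan is to compute $\mathrm{Cov}(Y,S)$ directly from the definition of the sampling model, conditioning on the ground truth signal $X$ and using the tower property of conditional expectation. First I would fix a time index and write $Y \mid X \sim B([X],p)$, so that $\E[Y \mid X] = p[X]$ (rounding $[X]$ to integer counts, which I will suppress as in Theorem~\ref{prop::covariance_dropout_quadratic}). The key structural assumption, already baked into the model in Figure~\ref{fig:sketch} and the surrounding definitions, is that the sampling noise is independent of everything else in the system: the coin flips that decide which events are observed do not depend on the external signal $S$. Consequently $S$ is conditionally independent of the sampling randomness given $X$, which gives $\E[YS \mid X] = S \cdot \E[Y \mid X] = p[X] S$ when $S$ is itself a (measurable) function of the underlying state, or more carefully $\E[YS \mid X,S] = S\,\E[Y\mid X,S] = pS[X]$ followed by taking expectations over $S$.

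\medskip

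With that in hand the computation is short. By the tower property,
\begin{equation}
\E[YS] = \E\big[\E[YS \mid X,S]\big] = \E\big[pS[X]\big] = p\,\E[XS],
\end{equation}
where in the last step I identify $[X]$ with $X$ as elsewhere in the paper. Likewise $\E[Y] = \E[\E[Y\mid X]] = p\,\E[X]$. Subtracting,
\begin{equation}
\mathrm{Cov}(Y,S) = \E[YS] - \E[Y]\E[S] = p\,\E[XS] - p\,\E[X]\E[S] = p\,\big(\E[XS] - \E[X]\E[S]\big) = p\,\mathrm{Cov}(X,S),
\end{equation}
which is exactly Equation~\ref{eq:cov_ex}. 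An alternative route, should one prefer to lean on what is already proved, is to note that Theorem~\ref{prop::covariance_dropout_quadratic} is essentially the special case $S = X$ (time-lagged), and the same conditioning argument that produces the $p^2$ there produces a single factor of $p$ here because only one of the two factors, $Y$, carries sampling noise; the additive $p(1-p)\E[X]$ term is absent precisely because $S$ contributes no variance from the sampling coins.

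\medskip

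I do not anticipate a genuine obstacle: the result is a one-line conditional-expectation calculation. The only point that deserves a sentence of care is the independence assumption — namely that the per-event sampling indicators are independent of the external signal $S$ (and of $X$), so that the conditional factorization $\E[YS\mid X,S] = S\,\E[Y\mid X,S]$ is legitimate. This is consistent with, and indeed a restatement of, the modeling assumption that partial observation is a stochastic thinning that acts only on the event stream itself and not on the exogenous covariates; I would state it explicitly at the start of the proof and then the rest follows mechanically.
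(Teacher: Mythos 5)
Your proposal is correct and follows essentially the same route as the paper: condition on $(X,S)$, use $\E[Y\mid X,S]=p[X]$, apply the tower property to get $\E[YS]=p\,\E[XS]$ and $\E[Y]=p\,\E[X]$, and subtract. Your explicit flagging of the conditional-independence assumption (that the sampling coins are independent of $S$ given $X$) is a point the paper's proof uses silently, so making it explicit is a welcome, if minor, improvement.
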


\paragraph{Epidemics Data.}
Weekly state-level data for all diseases was obtained from Scarpino \& Petri \cite{scarpino2019predictability} and originally compiled by the USA National Notifiable Diseases Surveillance System (see \textit{SM}, Table~\ref{si-epidemics} for statistics of the data).  For the covariance experiment, we used influenza data from 2010-2015 obtained for the US Outpatient Influenza-like Illness Surveillance Network (ILINet) that overlaps with Google Flu Trends Data.

\paragraph{Twitter Data.}
The social media data used in this study was collected from Twitter in 2014. Starting with a set of 100 users who were active discussing ballot initiatives during the 2012 California election, we expanded this set by retrieving the accounts of the users they followed, for a total of 5,599 \textit{seed users}. We collected all posts made by the seed users and their friends (i.e., users they followed on Twitter) over the period of June--November 2014, a total of over 600 thousand users. 
We extracted time series of the activity for 100 most popular hashtags and 150 most active users in this data (see \textit{SM}, Tables~\ref{si-twitter-S1} and \ref{si-twitter-S2} for statistics of the data).  

\paragraph{GitHub Data.}
The GitHub data we analyzed contains anonymized records of user activities over a time period spanning from January 1st, 2015 to March 31st, 2015.
The activities represent the  actions users performed on the repositories, including  watching the  repositories to receive notifications about project activity. We used \textit{watches}, \textit{forks}, and \textit{create} event activity as a measure of popularity of a repository in Github. Overall, our dataset captures $43,962$ Github activity events by $5,509$ users on $2,036$ repositories (see \textit{Supplementary Information (SI)}, Table~\ref{si-github} for additional statistics). Cryptocurrencies' historical prices were obtained from publicly available Kaggle datasets.


\section*{Discussion}
We presented a framework to analyze the effects of partial observation of a dynamic process, showing that sampling degrades the predictability of the process. Using empirical data from three domains, namely epidemics, social systems, and software collaborations, we highlighted how this fundamental predictability limit affects the forecasting of disease outbreaks, social media content popularity, and emergence of cryptocurrency technologies.
We showed that even when events making up the temporal signal are sampled at random, sampling qualitatively changes the observed dynamics of the process, decreasing the autocorrelation and increasing permutation entropy. Moreover, the predictability loss is irreversible: even a highly informative external signal does not help to fully recover predictability lost to sampling. These findings were corroborated by experiments on synthetic data.

Our work is motivated by applications requiring the forecasting of partially observed, or sampled, complex systems. Such situations may occur, for example, when country-wide forecasts of influenza have to be made based on reports by a few hospitals; when longitudinal opinion polls of a population are used to predict an election; or when researchers avail of random samples of social media activity to characterize complex social dynamics.
Beyond prediction, models learned from data can also elucidate social behaviors~\cite{Hofman2017}. Scientists developed techniques for temporal data analysis, based on anomaly detection~\cite{dewhurst2019shocklet} and regression discontinuity design~\cite{Herlands2018kdd}, to uncover natural experiments that yield insights into the mechanisms of human decision making. As we showed in this paper, however, these techniques may be systematically biased by temporal sampling. It is, therefore, imperative to account for potential 
sampling biases in the study of social dynamics, so that no results are erroneously attributed to the phenomena under study. Thus, it is important for future research to focus on statistical tools and sampling methods that can correct for these possible biases. 

Our work suggests that partial observability not only diminishes the predictability of a dynamic process, but also introduces a source of heterogeneous random noise that can potentially 
mislead causal inference methods and threaten their validity.
For example, interrupted time series (ITS) analyses is one of the most widely applied approaches to evaluate natural experiments in health interventions~\cite{craig2017natural}. ITS consists of a sequence of counts over time, with one or more well-defined change points that correspond to the introduction of an intervention. The effect of the intervention can be estimated by fitting a linear regression model with a dummy variable for the before/after intervention, and additional variables to control for time-varying confounders. 
Only recently, researches have addressed methodological issues associated with ITS analysis caused by over-dispersion of time series data and autocorrelation~\cite{bernal2017interrupted}. For instance, a study estimating the impact of a ban on the offer of multipurchase discounts by retailers in Scotland, found a 2\% decrease in alcohol sales after controlling for seasonal autocorrelation, compared with a previous study's finding no impact \cite{robinson2014evaluating}.
Our work provides a theoretical framework to understand and quantify new sources of biases that sampling creates that can affect intervention studies.

\subsection*{Code Availability}\small{
Codes to generate the results of the paper are available on \url{https://github.com/aabeliuk/Predictability-partially-observed}. 
}

\subsection*{Data Availability} \small{
This work uses publicly available data. Links to data repositories can be found in the Methods section.
}

\subsection*{Acknowledgments}\small{
The authors thank Linhong Zhu for collecting the Twitter data and America Mazuela for the illustration.
%
\textbf{Funding:}
This work was supported by the Office of the \textit{Director of National Intelligence} (ODNI) and the \textit{Intelligence Advanced Research Projects Activity} (IARPA) via the \textit{Air Force Research Laboratory} (AFRL) contract number FA8750-16-C- 0112, and by the \textit{Defense Advanced Research Projects Agency} (DARPA), contract number W911NF-17-C-0094. The U.S. Government is authorized to reproduce and distribute reprints for Governmental purposes notwithstanding any copyright annotation thereon. Disclaimer: The views and conclusions contained herein are those of the authors and should not be interpreted as necessarily representing the official policies or endorsements, either expressed or implied, of ODNI, IARPA, AFRL, DARPA, or the U.S. Government.
\textbf{Authors contributions:}
All authors conceptualized the study; ZH carried out formal analysis; AA carried out validations with empirical data; ZH and AA carried out validations with synthetic data; all authors contributed to writing and reviewing the manuscript.
\textbf{Competing interests:}
Authors declare no competing interests.
}

\bibliography{scibib}

\bibliographystyle{Science}






\newpage
\section*{Supplementary Information (SI)}

\subsection*{Proofs to Propositions and Corollaries}
\label{sec::appendix}
\subsubsection*{Proof of Theorem \ref{prop::covariance_dropout_quadratic}}
\label{sec:proof1}

\begin{theorem}[Restatement of Theorem \ref{prop::covariance_dropout_quadratic}]

The time series $X$, $Y$ are related by $Y \sim B([X],p)$, where $B([X],p)$ is a Bernoulli random process with success rate $p$. The covariance matrices $\boldsymbol{\Sigma}_X$ and $\boldsymbol{\Sigma}_Y$ are related as
\begin{equation*}
\boldsymbol{\Sigma}_Y = p^2\boldsymbol{\Sigma}_X + p(1-p)\E[X]\boldsymbol{I}
\end{equation*}
where $\boldsymbol{I}$ is the identity matrix.

\end{theorem}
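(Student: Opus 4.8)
The natural route is to condition on the ground‑truth signal $X$ and apply the law of total covariance. The key structural fact, built into the sampling model, is that given $X$ each event is kept independently with probability $p$, so conditionally on $X$ the coordinates $Y_1,\dots,Y_T$ are independent, with $Y_t \mid X \sim \mathrm{Binomial}([X_t],p)$. From this I get immediately $\E[Y_t \mid X] = p[X_t]$, $\mathrm{Var}(Y_t \mid X) = [X_t]\,p(1-p)$, and $\mathrm{Cov}(Y_i,Y_j \mid X) = 0$ for $i \ne j$.

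Then I would split into the off‑diagonal and diagonal cases. For $i \ne j$, the law of total covariance gives
\begin{equation*}
\mathrm{Cov}(Y_i,Y_j) = \E\bigl[\mathrm{Cov}(Y_i,Y_j \mid X)\bigr] + \mathrm{Cov}\bigl(\E[Y_i \mid X],\, \E[Y_j \mid X]\bigr) = 0 + \mathrm{Cov}(p[X_i], p[X_j]) = p^2\,\mathrm{Cov}(X_i,X_j),
\end{equation*}
which is exactly the $(i,j)$ entry of $p^2\boldsymbol{\Sigma}_X$. For $i = j$,
\begin{equation*}
\mathrm{Var}(Y_i) = \E\bigl[\mathrm{Var}(Y_i \mid X)\bigr] + \mathrm{Var}\bigl(\E[Y_i \mid X]\bigr) = p(1-p)\,\E[X_i] + p^2\,\mathrm{Var}(X_i),
\end{equation*}
so the diagonal picks up the extra term $p(1-p)\E[X_i]$. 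Assembling the entries into a matrix yields $\boldsymbol{\Sigma}_Y = p^2\boldsymbol{\Sigma}_X + p(1-p)\E[X]\boldsymbol{I}$ (reading $\E[X]\boldsymbol{I}$ as the diagonal matrix of coordinatewise means, which collapses to a scalar multiple of $\boldsymbol{I}$ under the stationarity assumption used later).

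The computation itself is routine; the only real point requiring care is the justification that $Y_i$ and $Y_j$ are conditionally independent given $X$, i.e. that $\mathrm{Cov}(Y_i,Y_j\mid X)=0$ for $i\neq j$. This is where the modeling hypothesis "each event is observed independently of other events" is actually used, and I would state it explicitly as the step that makes the cross terms vanish; everything else is the binomial mean/variance formulae plus bilinearity of covariance. A secondary bookkeeping point is to keep the integer‑part notation $[X_t]$ consistent so that the binomial parameter is a well‑defined nonnegative integer, but this does not affect the algebra.
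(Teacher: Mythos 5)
Your proof is correct, and for the off-diagonal entries it is essentially the computation the paper performs: the paper expands $\mathrm{Cov}(Y_i,Y_j)=\E[Y_iY_j]-\E[Y_i]\E[Y_j]$ via the tower property and uses the factorization $\E[Y_iY_j\mid X_i,X_j]=\E[Y_i\mid X_i]\,\E[Y_j\mid X_j]$, which is exactly your conditional-independence step phrased without the name "law of total covariance." Where you genuinely diverge is the diagonal. The paper obtains $\E[Y^2\mid X]$ by first replacing the binomial with a normal approximation $Y\mid X\sim\mathcal{N}([X]p,[X]p(1-p))$ and then running a $\chi^2(1)$ computation on the standardized variable; your route via the law of total variance, $\mathrm{Var}(Y_i)=\E[\mathrm{Var}(Y_i\mid X)]+\mathrm{Var}(\E[Y_i\mid X])$, uses only the exact binomial variance $\mathrm{Var}(Y_i\mid X)=[X_i]p(1-p)$ and is both shorter and exact (up to the same integer-part bookkeeping the paper already tolerates with its "$\approx$"). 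The normal approximation in the paper is in fact unnecessary for the claimed identity, so your argument is the cleaner one. Two of your side remarks are also worth keeping: the explicit flagging of conditional independence given $X$ as the step where the modeling hypothesis enters, and the observation that $\E[X]\boldsymbol{I}$ should properly be read as the diagonal matrix $\mathrm{diag}(\E[X_1],\dots,\E[X_T])$ unless stationarity is assumed --- the paper glosses over this and silently invokes stationarity only later, in Corollary~\ref{coro::auto_corr}.
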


\begin{proof}
First, we compute the off-diagonal elements of covariance matrices $\boldsymbol{\Sigma}_Y$ and $\boldsymbol{\Sigma}_X$, i.e., the relation between $\mathrm{Cov}(X_i,X_j)$ and $\mathrm{Cov}(Y_i,Y_j)$.
\begin{align}
\mathrm{Cov}(Y_i,Y_j) &= \E[Y_i Y_j] - \E[Y_i]\E[Y_j] \nonumber \\
&= \E_X\big[\E_Y[Y_i Y_j|X_i,X_j]\big] - \E_X\big[\E_Y[Y_i|X_i]\big]\E_X\big[\E_Y[Y_j|X_j]\big] \nonumber\\
&= \E_X\big[\E_Y[Y_i|X_i]\E_Y[Y_j|X_j]\big] - \nonumber\\
&\phantom{{}=1}\E_X\big[\E_Y[Y_i|X_i]\big]\E_X\big[\E_Y[Y_j|X_j]\big] \nonumber\\
&= \E_X\big[p^2[X_i][X_j]\big] - \E_X\big[p[X_i]\big]\E_X\big[p[X_j]\big] \nonumber \\
&= p^2\big( \E\big[[X_i][X_j]\big] - \E\big[[X_i]\big]\E\big[[X_j]\big]	\big) \nonumber\\
&\approx p^2 \mathrm{Cov}(X_i,X_j)
\label{cov_1}
\end{align}
Next, we discuss diagonal elements, i.e., the relation between $\mathrm{Var}[Y]$ and $\mathrm{Var}[X]$. 
Without loss of generality, normal approximation will be used:
$Y \sim B([X],p)$ is approximated by $Y \sim \mathcal{N}([X]p,[X]p(1-p))$.
Thus, \textit{for fixed} $X$, $\displaystyle Z \equiv \frac{Y - pX}{\sqrt{p(1-p)X}}\sim\mathcal{N}(0,1)$, then $\displaystyle Z^2 = \frac{Y^2 - 2pYX+p^2X^2}{p(1-p)X}\sim\chi^2(1)$, i.e., \[ \E[Z^2] = \E\bigg[\frac{Y^2 - 2pYX+p^2X^2}{p(1-p)X}\bigg]=1\]
This gives $\E [Y^2|X] = p(1-p)X + 2p\E[Y|X]X - p^2X^2 = p(1-p)X + p^2X^2$
\begin{align}
\mathrm{Var}[Y] &= \mathrm{Cov}(Y,Y) = \E[Y^2] - \E[Y]^2 \nonumber\\
&= \E_X\big[\E_Y[Y^2|X]\big] - \E_X\big[\E_Y[Y|X]\big]^2\nonumber\\
&= \E_X[p(1-p)X + p^2X^2] - \E_X[pX]^2\nonumber\\
&= \E_X[p(1-p)X + p^2X^2] - p^2\E_X[X]^2\nonumber\\
&= p(1-p)\E[X] + p^2\mathrm{Var}[X]
\label{cov_2}
\end{align}
Equations \eqref{cov_1} and \eqref{cov_2} give the desired result.
\end{proof}

\subsubsection*{Proof of Corollary \ref{coro::auto_corr}}
\begin{proof}
Autocorrelation is defined as Pearson correlation between values of the signal at different times, i.e.,
$$\rho_{X_i,X_j} = \frac{\mathrm{Cov}(X_i,X_j)}{\sigma_{X_i}\sigma_{X_j}}.$$
 This yields the following expression for autocorrelation of the time series $Y$:
\begin{equation}
    \rho_{Y_i,Y_j} = \frac{\mathrm{Cov}(Y_i,Y_j)}{\sigma_{Y_i}\sigma_{Y_j}}
    =\frac{ p^2 \mathrm{Cov}(X_i,X_j)}{\sqrt{p^2\mathrm{Var}(X_i) + p(1-p)\E[X_i]}\sqrt{p^2\mathrm{Var}(X_j) + p(1-p)\E[X_j]}}.
\end{equation}
The last equality comes from replacing Equation~\ref{cov_1} in the numerator and Equation~\ref{cov_2} in the denominator.
Finally, we assume the ground truth process is stationary, i.e., the process has a time-independent variance ($\mathrm{Var}(X_j)\approx \mathrm{Var}(X_j)\; \forall i,j$) and mean ($\E(X_j)\approx \E(X_j)\; \forall i,j$), yielding the desired result:
\begin{equation}
 \rho_{Y_i,Y_j} \approx \frac{ p^2 \mathrm{Cov}(X_i,X_j)}{p^2\mathrm{Var}(X) + p(1-p)\E[X]}.
\end{equation}
\end{proof}

\subsubsection*{Proof of Corollary
\ref{coro::autocorrelation}}
\begin{proof}
Based on Corollary~\ref{coro::auto_corr}, the autocorrelation of the sampled signal $Y$ between times $i$ and $j$, which is a function of sampling rate $p$, is defined as
\begin{equation}
R(p)=\rho_{Y_i,Y_j}
=\frac{ p^2 \mathrm{Cov}(X_i,X_j)}{\sqrt{p^2\mathrm{Var}(X_i) + p(1-p)\E[X_i]}\sqrt{p^2\mathrm{Var}(X_j) + p(1-p)\E[X_j]}}.
\end{equation}
The autocorrelation lies in the range $[-1, 1]$, and we want to prove that its magnitude increases as a function of $p$. Hence,  next we show that $\frac{\mathrm{d}}{\mathrm{d}p}R^2(p)\geq 0, \forall 0\leq p\leq 1$.

\begin{equation}
\frac{\mathrm{d}}{\mathrm{d}p}R^2(p) =
    p\;\mathrm{Cov}(X_i,X_j)^2 \frac{ \mathrm{Var}(X_i) \E[X_j] p + \E[X_i] \mathrm{Var}(X_j) p + 2 \E[X_i] \E[X_j] (1 - p)}{\left(\mathrm{Var}(X_i) p  - \E[X_i] p  + \E[X_i]\right)^2 \left(\mathrm{Var}(X_j) p - \E[X_j] p + \E[X_j]\right)^2},
\end{equation}
where both the numerator and denominator are trivially positive for all values in $0\leq p\leq 1$ given that $\E[X]\geq 0$, and the result follows.
\end{proof}

\subsubsection*{Proof of Corollary \ref{coro::covariance}}
\begin{proof}
We compute the covariance between sampled signal $Y$ and external signal $S$ as
\begin{align}
\mathrm{Cov}(Y,S) &= \E (Y-\E Y)(S - \E S) \nonumber\\
&= \E [Y S] - \E[Y] \E[S]\nonumber\\
&= \E_{X,S}\big[S\E_Y[Y|S,X]\big] - \E_X\big[\E_Y[Y|X]\big] \E[S]\nonumber\\
&= \E_{X,S}[S Xp] - \E_X[Xp]\E[S]\nonumber\\
&= p\big(\E[S X] - \E[X]\E[S]\big)\nonumber\\
&= p\,\mathrm{Cov}(X,S)
\end{align}
\end{proof}

\setcounter{proposition}{1}
\renewcommand{\theproposition}{S\arabic{proposition}}

\subsection*{Permutation Entropy Criterion}
\label{sec:entropy}
Entropy measures the uncertainty of a random variable, which intuitively serves as an indicator of predictability of a  stochastic event. In statistical physics, entropy characterizes the amount of possible microscopic state in a system, thus the more microscopic states exist in a system, the more chaotic a system is, and the harder it becomes to predict its behavior. 

\begin{definition}[Shannon entropy]
For a random variable, the (Shannon) entropy is defined as
\begin{align}
    \label{eqn::entropy_defn}
    \textrm{discrete case: }H(p) &= -\sum_{x\in\mathcal{X}} p(x)\ln p(x) \\
    \textrm{continuous case: }H(p) &= -\int_S p(x)\ln p(x)\,\mathrm{d} x
\end{align}
$p(x)$ is the probability distribution of the random variable. In the discrete case, $\mathcal{X}$ is the collection of all possible values of the random variable. In the continuous case, $S$ is the support set of the random variable.
\end{definition}

In both discrete and continuous case, the larger the entropy value is, the more uncertain a random variable is, thus rendering the stochastic event a random variable represents harder to predict. A variant form of entropy is the \textit{permutation entropy}, which depicts the complexity of a time series through the statistics of 
of the values of its subsequences using ordinal analysis. The complexity can be interpreted as the diversity of the trends among the subsequences of certain length. Therefore, the higher the entropy, the more different trends exist in the time series, which renders its prediction more difficult.

Permutation entropy has been used in various fields to characterize the predictability of time series under interest \cite{garland2014model,politi2017quantifying}. Interestingly, this quantity has also been used as a forensic tool to inspect and identify potential corruption in the source data \cite{garland2018anomaly}.  

\begin{definition}[Permutation Entropy]
Given a time series $\{x_t\}_{t=1}^N$. Let $\mathcal{S}_d$ be the collection of all $d!$ permutations $\pi$ of order $d$. For each $\pi\in\mathcal{S}_d$, determine the relative frequency of that permutation occurring in $\{x_t\}_{t=1}^N$:
\[
P(\pi) = \frac{\mathrm{Card}[\{t \,|\, t\le N-d,\phi(x_{t+1},\cdots,x_{t+d})=\pi\}]}{N-d+1} = \sum_{t\le N-d} \frac{1}{N-d+1} \delta\big( \phi(x_{t+1}^{(d)}),\pi \big)
\]
where $P(\pi)$ quantifies the frequency of an ordinal pattern $\pi$, and $\delta(a,b) = \begin{cases}
1, \, & \textrm{if } a=b\\
0, \, & \textrm{if } a\ne b
\end{cases}$.

The permutation entropy of order $d\ge 2$ is defined as
\begin{equation}
    H^{\mathrm{P}}(d) = -\sum_{\pi\in \mathcal{S}_d} P(\pi)\log_2P(\pi)
\end{equation}
\end{definition}

The ordinal pattern means the relative magnitude relation among successive time series values. As an example, if $x_1= 3, x_2 = 6, x_3 = 1$, then the ordinal pattern of this subsequence $\{x_1,x_2,x_3\}$ is $\phi(x_1,x_2,x_3) = (312)$ because $x_3\le x_1\le x_2$.

Besides the order $d$, the more general definition of permutation entropy  \eqref{defn::permutation_entropy} has one more parameter: temporal delay $\tau$. The ordinal pattern can be defined in the same way with respect to the subsequence $x_t, x_{t+1\tau}, x_{t+2\tau},\cdots, x_{t+(d-1)\tau}$, which gives permutation entropy $H^\mathrm{P}(d,\tau)$. To facilitate interpretation, we present results from continuous intervals by fixing $\tau = 1$.

To lessen the influence of observational noise on the ordinal pattern of the signal, the \textit{weight} w.r.t.\ a subsequence with certain ordinal pattern is introduced to reflect the importance of ordinal changes in large amplitude. For a subsequence of length/order $d$ consisting times series values from $x_{t+1}$ to $x_{t+d}$, which is denoted as $x_{t+1}^{(d)}$ with arithmetic mean value $\bar{x}_{t+1}^{(d)}$, its \textit{weight} is defined \cite{weighted_Perm_Entropy} as
\[
w(x_{t+1}^{(d)}) = \frac{1}{d} \sum_{j=t+1}^{t+d} (x_j - \bar{x}_{t+1}^{(d)})^2
\]
As a result, the \textit{weighted frequency of a permutation} is defined as
\[
P_w(\pi) =\sum_{t\le N-d} \bigg( \frac{w(x_{t+1}^{(d)})}{\sum_{t'\le N-d}w(x_{t'+1}^{(d)})} \bigg) \delta\big( \phi(x_{t+1}^{(d)}),\pi \big)
\]

The \textit{weighted permutation entropy} is defined as
\begin{equation}
    H^{\mathrm{P}}_{(w)}(d) = -\sum_{\pi\in\mathcal{S}_d} P_w(\pi)\log_2 P_w(\pi)
\end{equation}
Here we \textit{normalize} the weighted permutation entropy by the log-number of  the factorial of the observed permutations. Thus, the weighted permutation entropy takes value between 0 and 1.


\subsection*{Synthetic Data Experiments}
\label{appendix::Synthetic_Data}
Here, we validate our findings on synthetically generated time series data. We show that predictability diminishes as data is lost to sampling.
We first consider an idealized scenario, where $X$ represents an autoregressive process, from which events are sampled at random to create the observed sampled signal $Y$.

\subsubsection*{Synthetic Time Series Generation}
\label{sec-si-data}
\paragraph{External Signal}
First, we generate an external signal $S$. To assure autocorrelation, we generate it using the autoregressive integrated moving average (ARIMA)  model:
\begin{equation}
    S + \sum_{i=1}^k \alpha_i^{\textrm{ES}} S_{t-i} = \varepsilon_t + \sum_{j=1}^l \beta_j^{\textrm{ES}} \varepsilon_{t-j} .
\end{equation}
ARIMA coefficients satisfy the stationarity conditions, so that the external signal $S$ is second-order stationary. The stationarity conditions require that all roots of the polynomials $\alpha(x) = 1+\alpha_1 x + \cdots + \alpha_k x^k$ and $\beta(x) = 1+\beta_1 x + \cdots + b_l x^l$ satisfy $|z|>1$; i.e., all roots of these two polynomials are located outside the unit disk.
We enforce the second order stationarity by determining roots of $\alpha(x)$ and $\beta(x)$ first and then solving for the corresponding regression coefficients $\{\alpha_i\}$ and $\{\beta_j\}$, which specify the model.

\paragraph{Ground Truth Signal}
We generate the ground truth signal $X$ in a similar manner, except that the generation model entails a term for the external signal $S$. This ensures that the ground truth and the external signals are correlated. Specifically, we assume the ground truth signal is defined as:
\begin{equation}
X + \sum_{i=1}^K\alpha_i^{\textrm{GT}} X_{t_i} = \varepsilon_t + \sum_{j=1}^L \beta_j^{\textrm{GT}} \varepsilon_{t-j} + S.
\end{equation}
\noindent
When generating the ground truth signal time series, we require that the \textit{autocorrelation} of the ground truth signal is strong enough so that it can be distinguished from the random white noise.


\paragraph{Observed signal} We sample the ground truth data $X$ to obtain the time series of observed events, $Y$. The sampling rate $p$ characterizes the probability of sampling an event. Because the count process is described by the ARIMA model, which inevitably gives real-number-valued count instead of integer-valued count, the decimal part of the count $X$ is treated as a separate instance, and if the decision is made to keep it, its original value will be added to the posterior data.
In other words, each instance of $Y$ obeys the 
Binomial distribution $B(X,p)$.

Due to the stochastic nature of sampling, we generate ten different samples $Y$ based on the ground truth signal with the same sampling rate $p$. In each experiment, the sampled signal $Y$ is split into a training and testing data set, with training data used to train a predictor $\hat{Y}$ to predict the test data. The  accuracy of the predictor for a given sampling rate is then averaged over the ten experiments.

\paragraph{Model Training}
Training an ARIMA model consists of two steps: First, a grid search is performed to find the best  hyper-parameters $(k,l)$, where $k$ is the order of the autoregressive model, and $l$ is the order of the moving-average model.
For each input signal, we search over the grid for the set of parameters resulting in the lowest AIC score. Next, the corresponding coefficients of the fixed-order ARIMA model are fitted to the data.

After the best ARIMA order parameters are determined, we do step by step prediction over a specified time range. At each prediction step, the data from the previous step is incorporated into the known data as new input signal, and consequently, the model is retrained to find the updated parameters. 

\paragraph{Prediction}
 In order to compare predictions at different sampling rates, we use normalized rooted-mean-square error (NRMSE) to measure how accurately we predict the observable $Y$.
Given the predicted values $\hat{y}_t$ with respect to time series $Y$, NRMSE is defined as
\begin{align}
    \mathrm{NRMSE}(y_t,\hat{y}_t) &= \frac{1}{\bar{y}}\sqrt{\frac{\sum_{i=1}^T(y_i-\hat{y}_i)^2}{T}}\nonumber \\
    &= \frac{1}{\frac{1}{T}\sum_{i=1}^T y_i}\sqrt{\frac{\sum_{i=1}^T(y_i-\hat{y}_i)^2}{T}}
\end{align}

\begin{center}
\begin{figure}[t]
\centering
\includegraphics[width=0.8 \columnwidth]{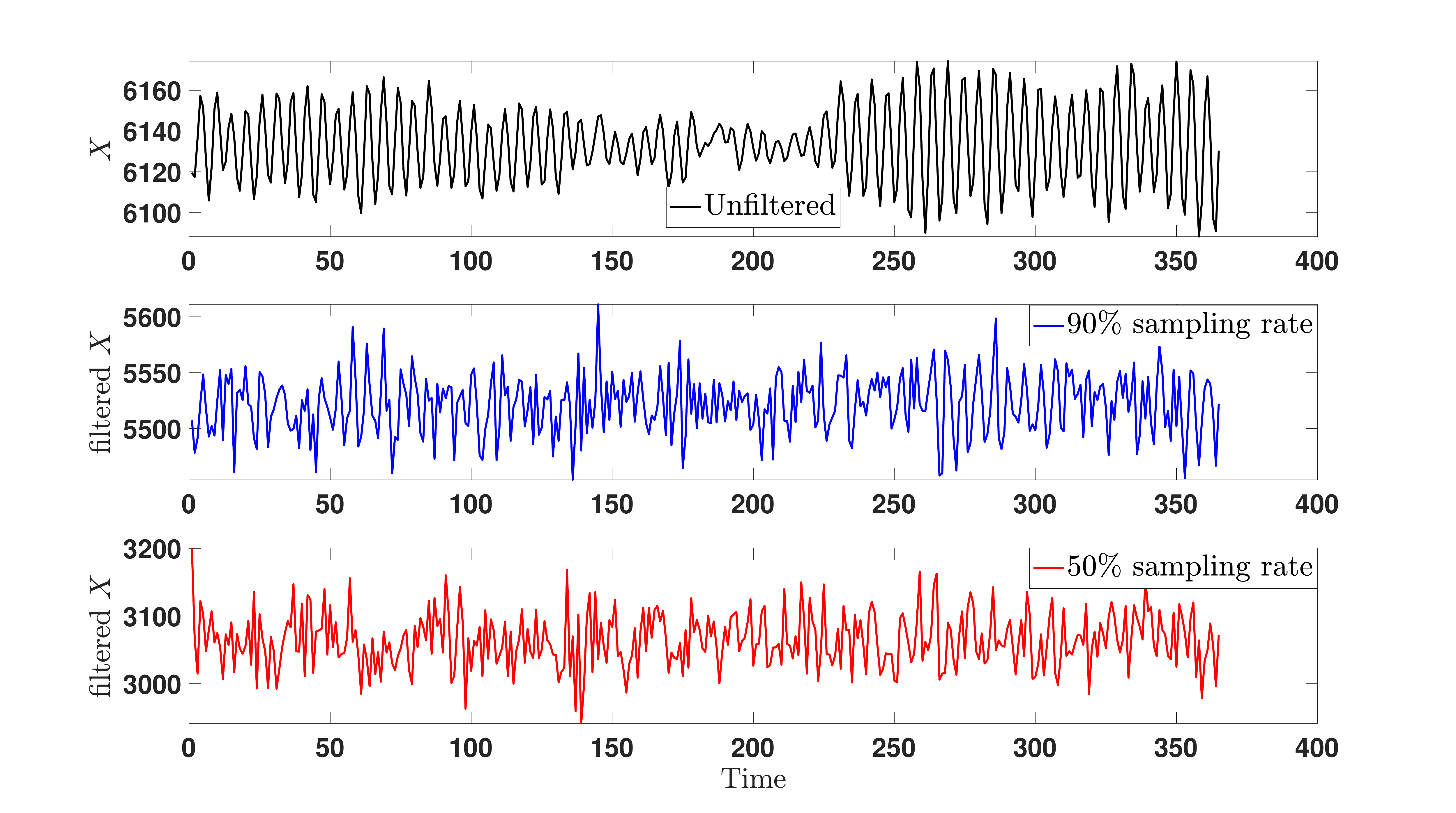}
\caption{Synthetic time series showing ground truth signal (top) and the observed  signal sampled at sampling rates $90\%$ (middle) and $50\%$ (bottom).
}
\label{fig:signals}
\end{figure}
\end{center}

\subsubsection*{Numeric Experiments}
First, we illustrate all our theoretical claims using one instance of an ARIMA process generated with an external signal. Second, we present aggregated results of the prediction task, using multiple randomly generated ARIMA time series.

For the first set of results on the synthetic data, we generated an external signal $S$ with the ARIMA of order (3,0,2) and length 365, representing a full year of event counts.

Meanwhile, the ground truth signal $X$ assumes the ARIMA order(5,0,1).

\noindent
Figure~\ref{fig:signals} shows the ground truth and the sampled signals. Notice that with a simple visual inspection of the plot, one can observe that many of the temporal patterns present in the unsampled data seem to have disappeared in the sampled signal.

\paragraph{Predictors}
We train three predictors for $Y$, each of which can be used with or without an external signal. The predictors are:

\begin{description}
    \item[Poisson predictor] {assumes that events in $Y$ are generated independently of each other at some rate.} This predictor estimates the Poisson intensity as the average of counts of all available past data.
    \item[$\boldsymbol{\hat{Y}_t}$] uses the sampled signal to predict the observable $Y$.
    \item[$\boldsymbol{\hat{X}_t}$] uses the unsampled signal to predict the observable $Y$. i.e., the ARIMA parameters are fitted to $X$, and then used to predict the future values of $Y$ given its past values.
\end{description}

\begin{figure}[t]
\centering
\includegraphics[width=0.7 \columnwidth]{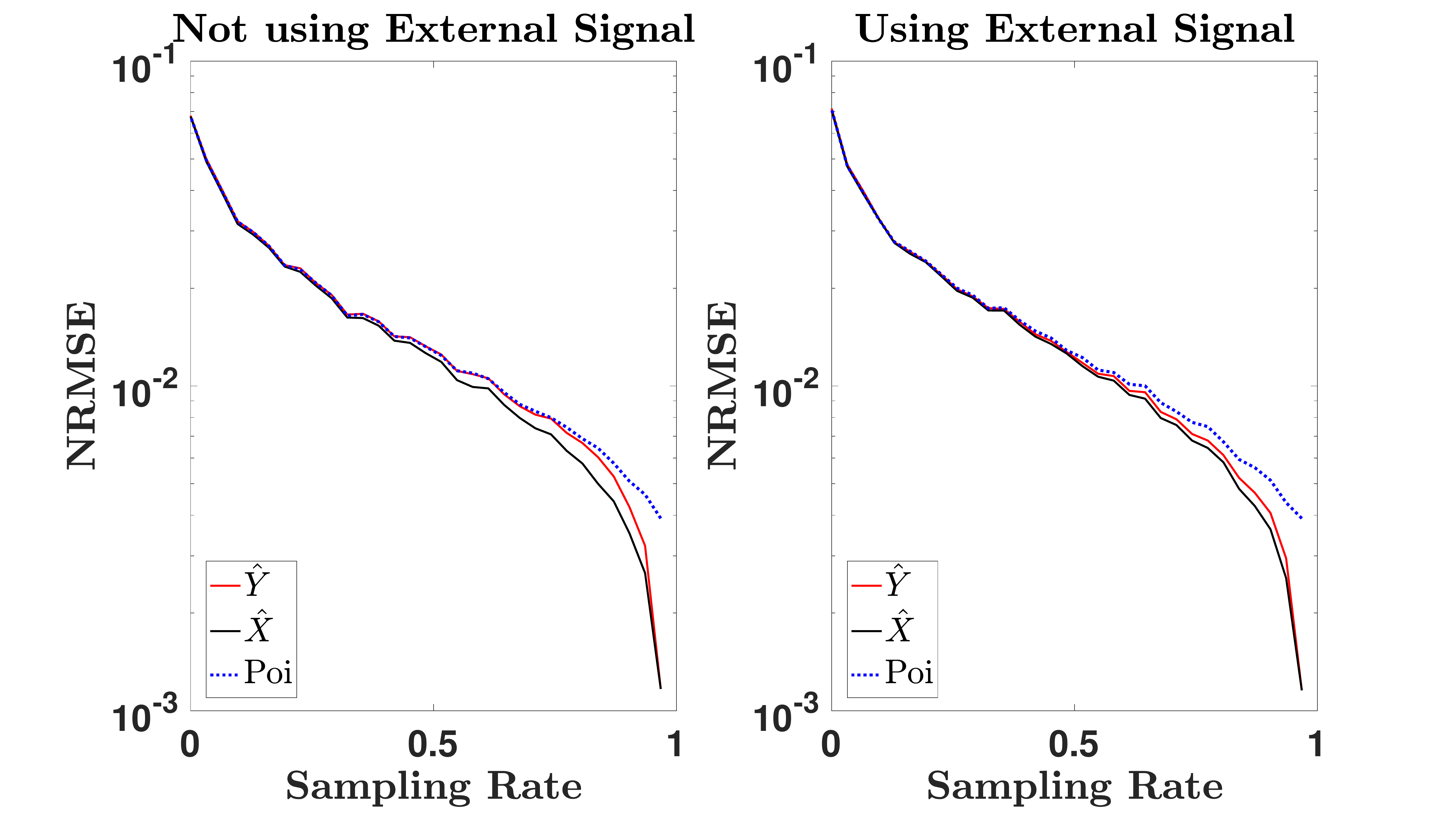}
\caption{Prediction accuracy in terms of NRMSE, and the difference in prediction performance among different predictors versus Sampling Rate Plot. Here we use three predictors to estimate sampled counting process $Y$. 
}
\label{fig::RMSEvsDropout}
\end{figure}

\paragraph{Prediction Accuracy}
Figure \ref{fig::RMSEvsDropout} shows normalized prediction errors (normalized RMSE) as a function of sampling rate to demonstrate the nonlinear decrease of the prediction error. As sampling rate decreases, prediction error grows.
We study the performance of the predictor $\hat{Y}$, which is trained on the history of the observed signal $Y$, as it is often employed in practice.

Performance of the predictor $\hat{X}$, trained on the full signal $X$, is almost always better (lower NRMSE) than performance of predictor $\hat{Y}$ (the plot show difference between predictors on the log scale). This phenomenon reveals that sampling weakens prediction accuracy. Moreover, our results suggest that the sampled process' increased noise and low autocorrelation obfuscates the underlying dynamic, making it harder to be described by an ARIMA model. 
Using an informative external signal $S$ in prediction helps recover some of the lost information, shrinking the gap between $\hat{X}$ and $\hat{Y}$ predictors, as well as the overall prediction error.

When little information is lost (i.e., at high sampling rate), predictors $\hat{X}$ and $\hat{Y}$ outperform the Poisson predictor, since they are able to leverage the autocorrelation of the signal with the ARIMA model. In addition, by comparing the gaps between predictors at high sampling rates (note the log scale), we see that adding external signal makes the Poisson predictor less competitive than the other predictors.
On the other hand,  Poisson predictor performs almost as well as $\hat{Y}$ and $\hat{X}$ when much of the information is lost (i.e., at low sampling rate).  This indicates that Poisson predictors are strong baselines for the observable $Y$ at low sampling rates. 

\begin{figure}[t]
\centering
\includegraphics[width=0.7\columnwidth]{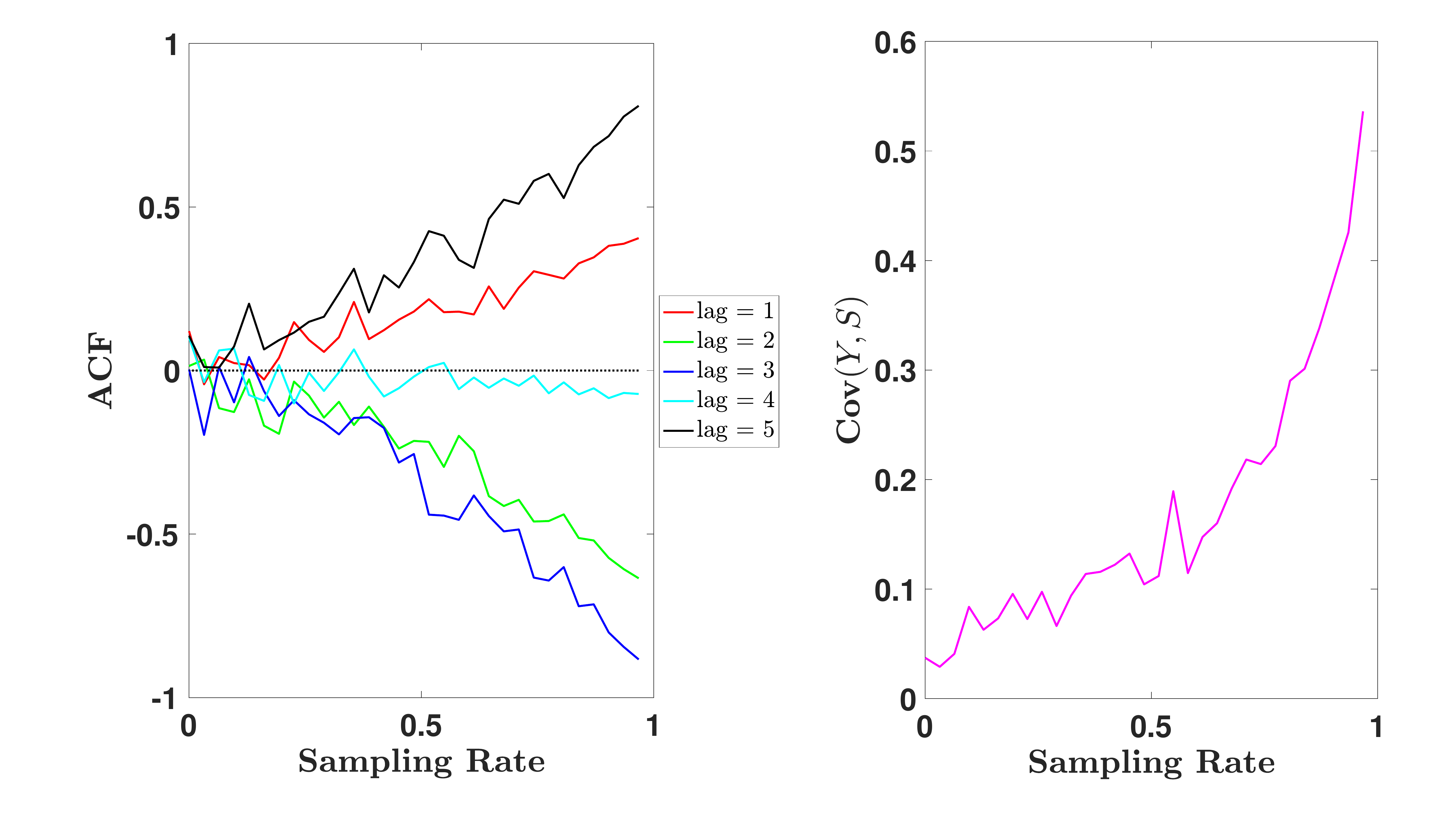}
\caption{Loss of predictability due to sampling. (left) autocorrelation of sampled time series decreases at low sampling rates. 
(right) Covariance of the  external signal and observed signal decreases at low sampling rates.
}
\label{fig::PACF}
\end{figure}

\paragraph{Loss of autocorrelation}
Figure~\ref{fig::PACF} shows that the autocorrelation of the sampled signal increases with sampling rate, consistent with Corollary~\ref{coro::autocorrelation}. This figure shows that sampling at low rate quickly destroys the innate autocorrelation of the signal, fundamentally altering the properties of the signal and rendering the prediction task harder.
We also see that the correlation between sampled ground truth  
signal $Y$ and the external signal gradually increases in agreement with Corollary \ref{coro::covariance}.
As a consequence of the loss of autocorrelation and correlation with the external signal, we can observe in Figure \ref{fig::RMSEvsDropout}, that at low sampling rates, the accuracy of the Poisson predictor is competitive to the ARIMA predictors.

\paragraph{Increase in Permutation Entropy}
\begin{figure}[H]
\centering
\includegraphics[width=0.7\columnwidth]{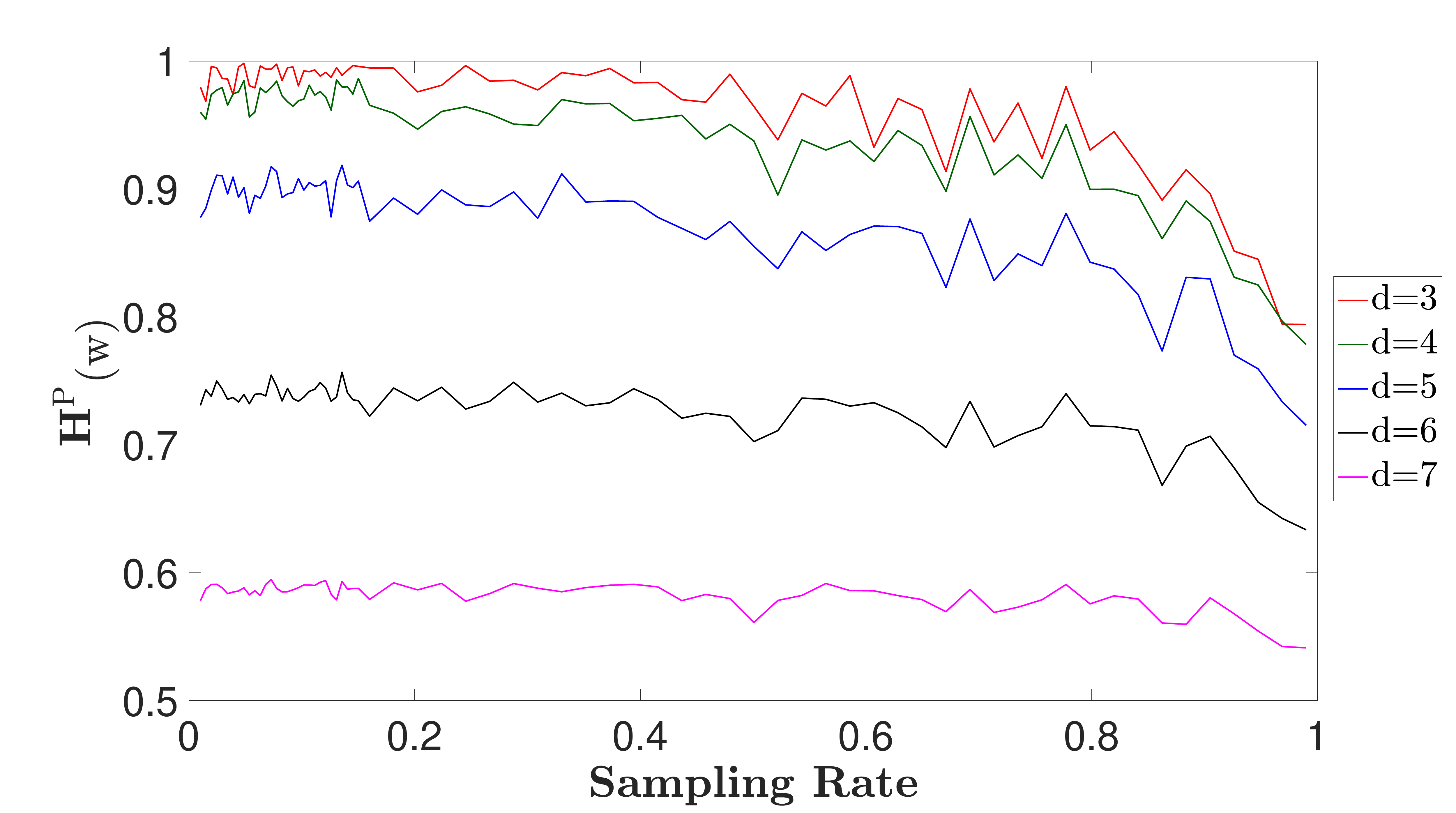}
\caption{Loss of predictability due to sampling for the synthetic data shown in Figure \ref{fig:signals}. Normalized weighted permutation entropy with respect to sampling rate increases at low sampling rates, showing the system becomes less predictable. Parameter $d$ is the embedded time dimension used to compute corresponding weighted permutation entropy $H^\mathrm{P}_{(\textrm{w})}$. The delay dimension is set to $\tau=1$.
}
\label{fig::PermEntSyn}
\end{figure}

Figure \ref{fig::PermEntSyn} shows that the weighted permutation entropy decreases at high sampling rates, when more of the signal is retained. 
This shows the loss of predictability of the process at low sampling rates. The trends in the figure imply that even when little of the signal is filtered out,  its predictability significantly degrades.

\paragraph{Decrease in Mutual Information}

\begin{figure}[H]
    \centering
    \begin{subfigure}{.7\textwidth}
        \centering
        \includegraphics[width=\linewidth]{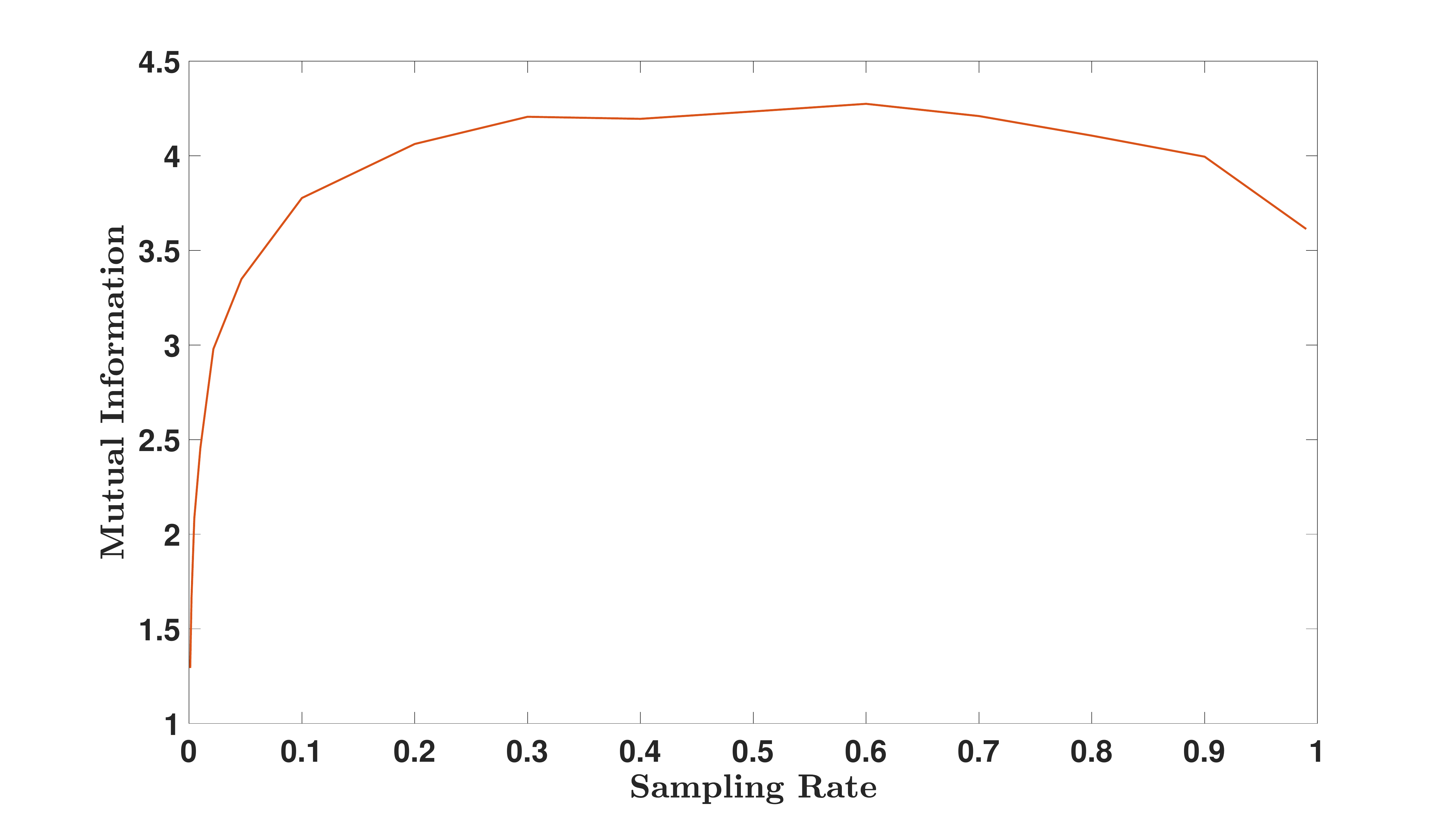}
    \end{subfigure}
    \caption{
    Decay of mutual information between the external signal and the sampled signal. Although the external signal in this illustration is highly correlated with the original (unfiltered) ground truth signal, sampling leads to a sharp loss of information about the original signal.}
        \label{fig::mi_decay}
\end{figure}

The loss of predictability cannot be offset using an informative external signal. This is because even if the external signal is highly correlated with the ground truth signal, sampling reduces its utility in predictive tasks. Figure \ref{fig::mi_decay} shows this decay in mutual information between the external and observed signals at low sampling rates.  The informative external signal does not reduce the uncertainty of the observed signal.

\subsubsection*{Nonstationarity of Prediction Errors}
\label{sec-si-garch}
Another quantity that characterizes the impact of sampling on the predictability of a time series is the covariance of the prediction error at different times, as the prediction error can intuitively reflect how well one can predict the event count. We show that sampling the time-series will introduce an autocorrelation into the prediction error and 
render it dependent on the evolution of the counting process, with errors growing larger or smaller depending on the type of process.
Next, we provide an example to motivate how sampling can induce a correlation between variances of predictions at different times.
\begin{proposition}\label{prop:garch}
Consider an auto-regressive (AR) process: $X_t = \alpha X_{t-1} + \varepsilon_t,$ with $\varepsilon_t$ white noise. The variance at the next step is given by,
\[
\mathrm{Var}(Y_{t}|X_{t}) = \alpha \mathrm{Var}(Y_{t-1}|X_{t-1}) + \varepsilon'_t.
\]
\end{proposition}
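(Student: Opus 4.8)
The plan is to reduce the claimed recursion to the elementary conditional-variance formula for a binomial count, composed with the AR recursion on $X$. First I would record that, under the sampling model $Y_t\sim B([X_t],p)$, the conditional law of $Y_t$ given $X_t$ is binomial with $[X_t]$ trials and success probability $p$, so that $\mathrm{Var}(Y_t\mid X_t)=p(1-p)[X_t]\approx p(1-p)X_t$. This is exactly the single-index specialization of the diagonal computation already performed in the proof of Theorem~\ref{prop::covariance_dropout_quadratic}: the normal-approximation step there gives $\E[Y^2\mid X]=p(1-p)X+p^2X^2$, hence $\mathrm{Var}(Y\mid X)=\E[Y^2\mid X]-\E[Y\mid X]^2=p(1-p)X$, with the same real-valued-count convention $[X]\approx X$.

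Given this, the substitution is immediate. Plugging $X_t=\alpha X_{t-1}+\varepsilon_t$ into $\mathrm{Var}(Y_t\mid X_t)=p(1-p)X_t$ yields
\[
\mathrm{Var}(Y_t\mid X_t)=p(1-p)\alpha X_{t-1}+p(1-p)\varepsilon_t=\alpha\,\mathrm{Var}(Y_{t-1}\mid X_{t-1})+\varepsilon'_t,
\]
where $\varepsilon'_t:=p(1-p)\varepsilon_t$. Since $\varepsilon_t$ is white noise and $p(1-p)$ is a deterministic constant, $\varepsilon'_t$ is again a zero-mean, serially uncorrelated sequence, i.e.\ white noise, which is precisely the form asserted in the statement. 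The key structural point underlying the whole argument is that the map $X_t\mapsto\mathrm{Var}(Y_t\mid X_t)$ is \emph{linear} in $X_t$, so any linear recursion satisfied by $X$ transfers verbatim to the conditional variance process $V_t:=\mathrm{Var}(Y_t\mid X_t)$.

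The only place any care is needed — and, I expect, the ``main obstacle'' in the sense of being the subtle rather than the computational step — is the interpretation of the identity. The derivation shows that $V_t$ inherits the AR(1) dependence structure of $X$ itself, so $V_t$ is time-varying and autocorrelated rather than constant; this is the sense in which random sampling injects heteroskedasticity (a GARCH-type dependence of the innovation variance on the past of the process) into the observed signal $Y$. I would flag the same caveat the paper already adopts for its ARIMA-generated counts: because $\varepsilon_t$ can be negative in the idealized real-valued model, $V_t$ should be read as the binomial-variance surrogate $p(1-p)[X_t]$ rather than a literal nonnegative variance, and the recursion is exact up to the rounding $[X_t]\approx X_t$. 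Beyond this bookkeeping no deeper obstacle arises, so the proof is essentially the two-line chain displayed above together with the observation that scaling white noise by a constant leaves it white noise.
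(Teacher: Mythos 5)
Your proof is correct and follows essentially the same route as the paper's: both use $\mathrm{Var}(Y_t\mid X_t)=p(1-p)X_t$ from the binomial sampling model, substitute the AR recursion for $X_t$, and absorb $p(1-p)\varepsilon_t$ into a new white-noise term $\varepsilon'_t$. Your additional remarks on the linearity of $X_t\mapsto\mathrm{Var}(Y_t\mid X_t)$ and the rounding convention $[X_t]\approx X_t$ are sensible clarifications but do not change the argument.
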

\begin{proof}
The information filter can be described as a Binomial distribution, $Y\sim B(X,p)$. Hence, $\mathrm{Var}(Y_{t-1}|X_{t-1})=X_{t-1} p(1-p)$ using the fact that $Y_{t-1}|X_{t-1}$ is a Bernoulli random variable. Then, the variance at the next step is given by,
\begin{align*}
    \mathrm{Var}(Y_{t}|X_{t})=& X_{t} p(1-p)\\
    =& \left( \alpha X_{t-1} + \varepsilon_t\right) p(1-p)\\
    =&  \alpha X_{t-1}p(1-p) + \varepsilon'_t\\
     =&  \alpha \mathrm{Var}(Y_{t-1}|X_{t-1}) + \varepsilon'_t.
\end{align*}
\end{proof}
\noindent Proposition~\ref{prop:garch} shows that the variances of the sampled process, $Y$, are related by the exact same AR model that generated the process $X$. In other words, the conditional variance of the sampled process is autocorrelated. Notice, that this is not true for the unsampled data, given that $\mathrm{Var}(X_{t}|X_{t-1}) = \mathrm{Var}(\varepsilon_t)=\sigma^2$. 

Proposition~\ref{prop:garch}  shows that sampling a time series may introduce autoregressive conditional heteroskedasticity (ARCH) of the variance into the time series. This is usually tested by analyzing the residuals of the model. We use Engle's Langrage Multiplier test to demonstrate the appearance of ARCH effects in the residual signal.
Figure \ref{fig::statstest} shows that sampling does result in the introduction of ARCH effects to the residuals of predictions.


\begin{figure}[t]
\centering
\includegraphics[width=0.8\textwidth]{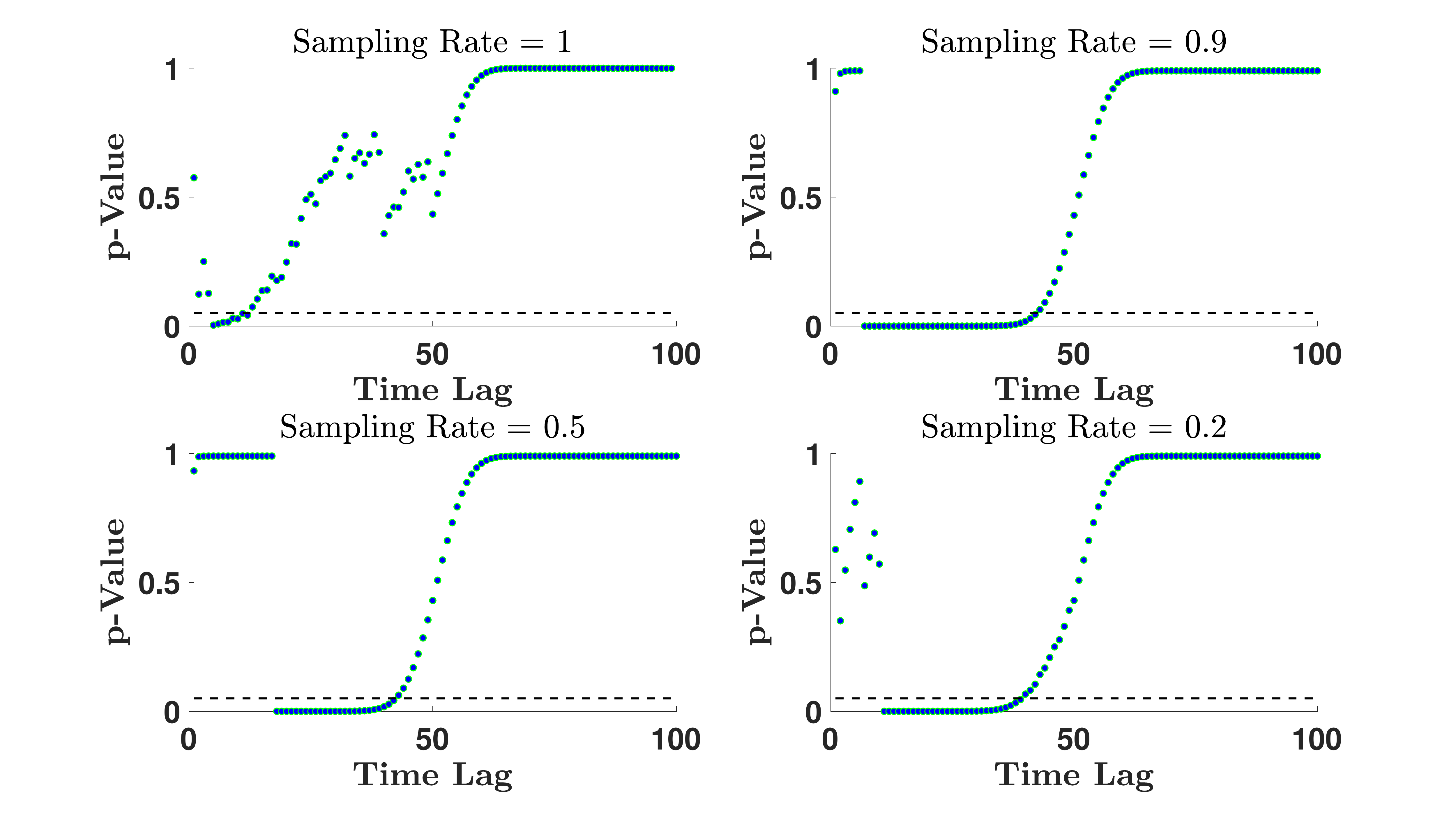}
\caption{$p$-Value versus time lag}
\label{fig::statstest}
\end{figure}

\begin{figure}[t]
\centering
\includegraphics[width=0.8\textwidth]{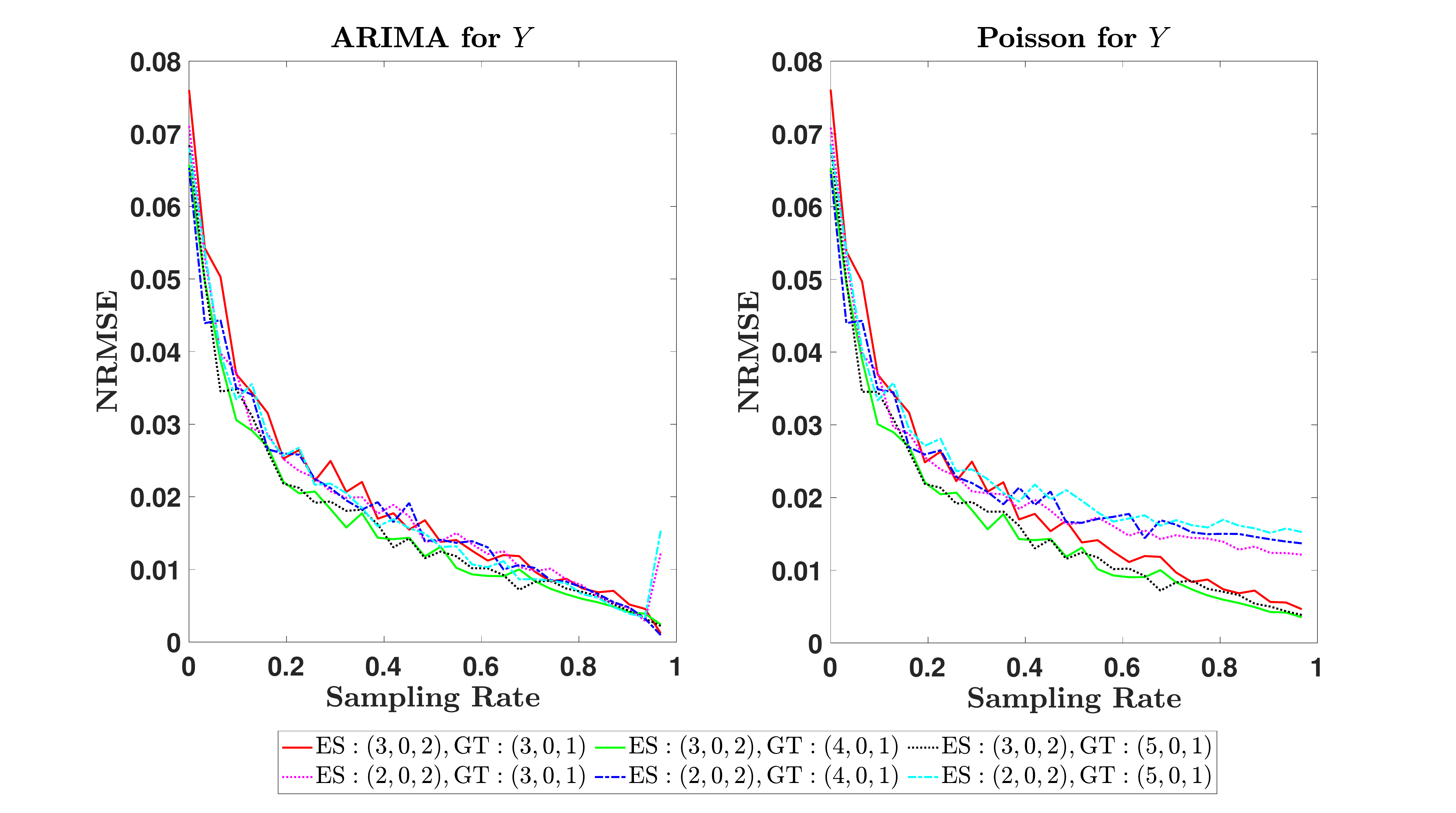}
\caption{Change of prediction accuracy (RMSE) with respect to sampling rate for various ground truth signals, when the sampled GTs are predicted without knowing external signal $S$.
}
\label{fig::multiARMA}
\end{figure}

We have applied Engle's Lagrange Multiplier (ELM) test \cite{engle1982autoregressive} on the residual $\hat{Y} - Y$, which is evaluated at time $t,t-1,\cdots,t-L$ where $L=100$ is the maximum time lag we consider, to examine the existence of ARCH behavior. Under the null hypothesis that there is no ARCH effect, the test statistic used in ELM test has the asymptotic distribution $\chi^2(L)$. When $p$-value is less than the significance level $\tilde{\alpha} = 0.05$, the ELM test says that we can reject the null hypothesis with 95\% confidence.

From Figure \ref{fig::statstest}, we see that for the original time-series (top-left plot) there is no heteroskedasticity, $p$-values are largely over the significant level threshold $0.05$ for most of time lags under consideration. In the case of $q=0$, ELM test does not provide us enough evidence to reject the hypothesis that ARCH effect does not exist. However, when we filter the original signal, as we see in the case $q=0.1,0.5,0.8$ respectively, the $p$-value for time lags from 6 to 40 are mostly below the significance level bar, which, by ELM test, strongly suggests that the null hypothesis be rejected. In other words, ELM test suggests with 95\% confidence that ARCH effect exists in the residual of prediction records on the filtered time-series.



\subsubsection*{Generalizability}
\label{sec-si-generalizability}
So far, we have explored methodically one example where we have validated our theoretical results as well as provided new insights about the unpredictability of sampled time series. Here we show the average loss of predictability across many randomly generated ground truth time series.

We generate multiple external signal--ground truth signal pairs, and apply ARIMA and Poisson models on these data sets to test if the tendencies we have observed previously (i.e. nonlinear decrease of NRMSE with respect to sampling rate) will also appear in signals of different ARIMA orders.
We observe in Figure \ref{fig::multiARMA} similar behavior across multiple data sets:
\begin{enumerate}
    \item Increasing 
prediction error (NMRSE) despite small discrepancy at high sampling rates. This general behavior is partially due to the normalization by sample mean when we evaluate the NRMSE;
\item The autoregressive model outperforms the non-autoregressive Poisson model for higher sampling rates, but, this difference fades out for lower sampling rates as a consequence of sampling. 
\end{enumerate}

We further postulate that this decreasing tendency between increasing sampling rate and NRMSE bears generality with respect to any combination of ARIMA orders of external signal--ground truth data pair.
When predicting $Y$ with the external signal $S$, we observe a very similar tendency of NRMSE---sampling rate relationship as in Figure~\ref{fig::multiARMA}. Albeit noticeable difference in the high end of sampling rate spectrum, the decreasing tendency, decreasing speed and the amplitude of NRMSE are by large the same as in the not-using-$S$ case.

In the general setting of ARIMA model, we show a decreasing prediction error w.r.t.\ the sampling rate, and the shrinking of the advantage of autoregressive model in prediction accuracy over Poisson model as sampling rate becomes smaller. 

We demonstrated the decoupling of the external signal and the filtered signal as a prevalent phenomenon in the generic ARIMA setting. We observe in Figure \ref{fig:mi_general} that for all ground truth-external signal pairs, the lower the sampling rate is, the smaller mutual information becomes. Therefore, 
we postulate that the positive correlation between sampling rate and mutual information can always be observed, thus implying that the knowledge of external signal cannot help recover the predictability of the ground truth signal.
\begin{figure}[H]
\begin{center}
\centering
\includegraphics[width=0.8\columnwidth]{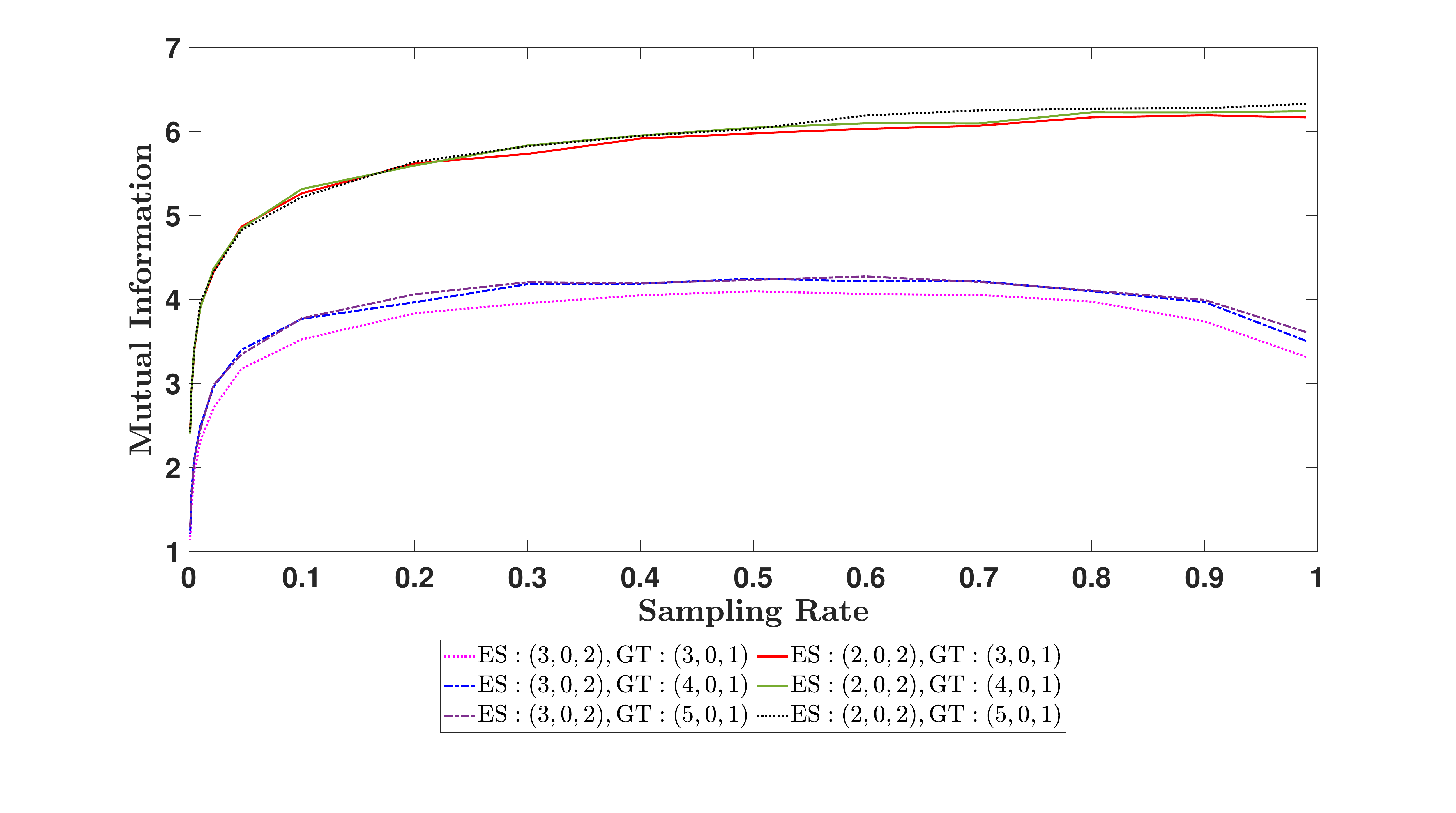}
\hspace{0.2cm}

\caption{Change of mutual information (MI) with respect to sampling rate for various (ground truth, external signal) pairs.}
\label{fig:mi_general}
\end{center}
\end{figure}

\subsection*{Supplementary Figures}
\subsubsection*{Epidemics.}
\begin{figure}[H]
\begin{center}
\centering
\includegraphics[width=0.55\columnwidth]{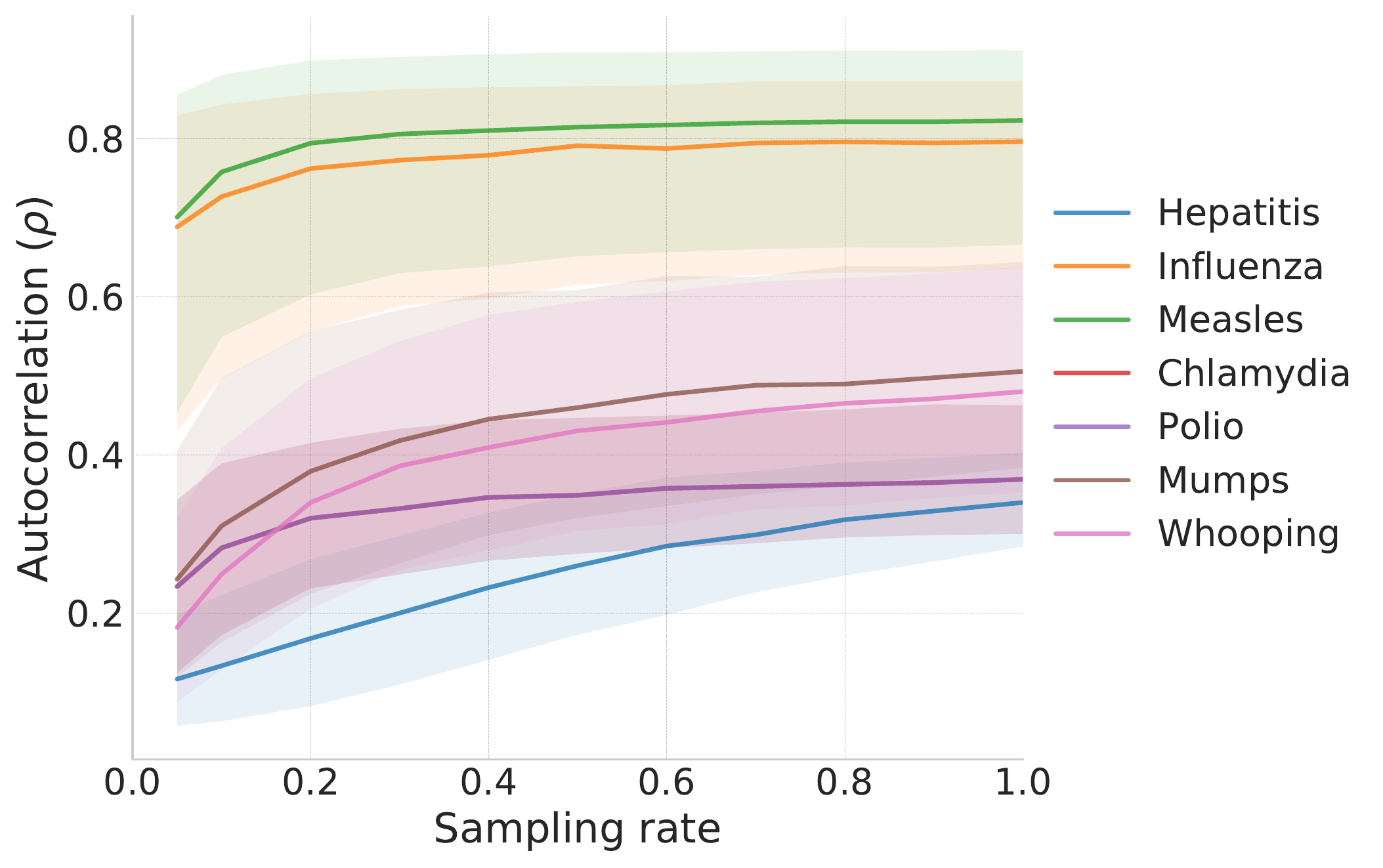}
\hspace{0.3cm}

\caption{Theoretical and empirical Loss of autocorrelation in outbreaks due to sampling for all diseases.   The plot depicts a decrease on the autocorrelation as drop-out rate increase. For each of the eight weekly, state-level diseases, we selected $100$ random points and calculated the entropy and autocorrelation for different drop-out rates over a one year window. The solid line represents the median, shaded region marks the interquartile range.}
\label{fig:epidemics_theo_si}
\end{center}
\end{figure}

\begin{figure}[H]
\begin{center}
\centering
\includegraphics[width=0.55\columnwidth]{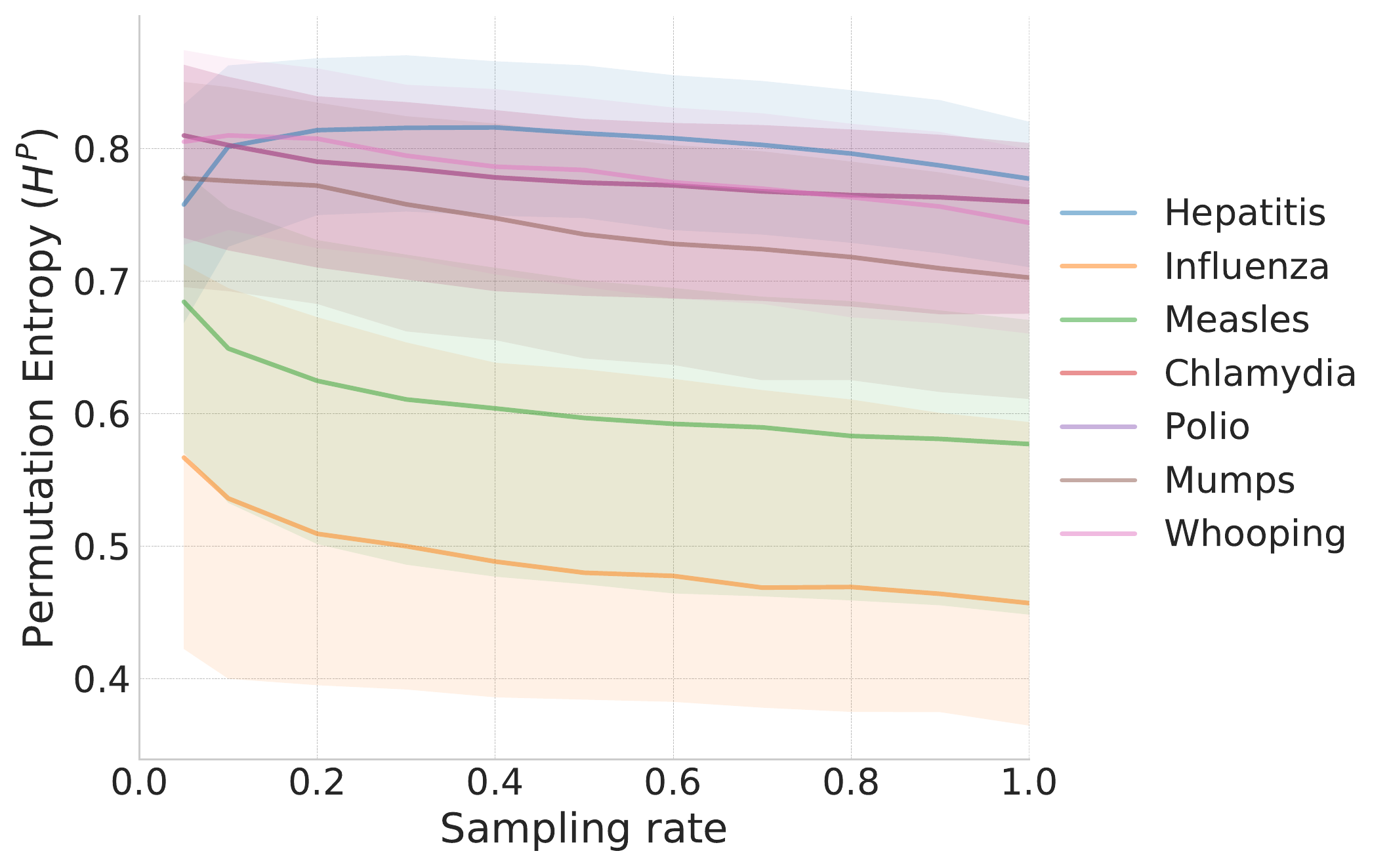}
\caption{Loss of predictability due to sampling for all diseases.   
The plot shows and increase on the permutation entropy as drop-out rate increase. For each of the eight weekly, state-level diseases, we selected $100$ random points and calculated the entropy and autocorrelation for different drop-out rates over a  one year window. The solid line represents the median, shaded region marks the interquartile range. }
\label{fig:exp_epidemics_si}
\end{center}
\end{figure}

\begin{figure}[H]
\begin{center}
\centering
\includegraphics[width=0.55\columnwidth]{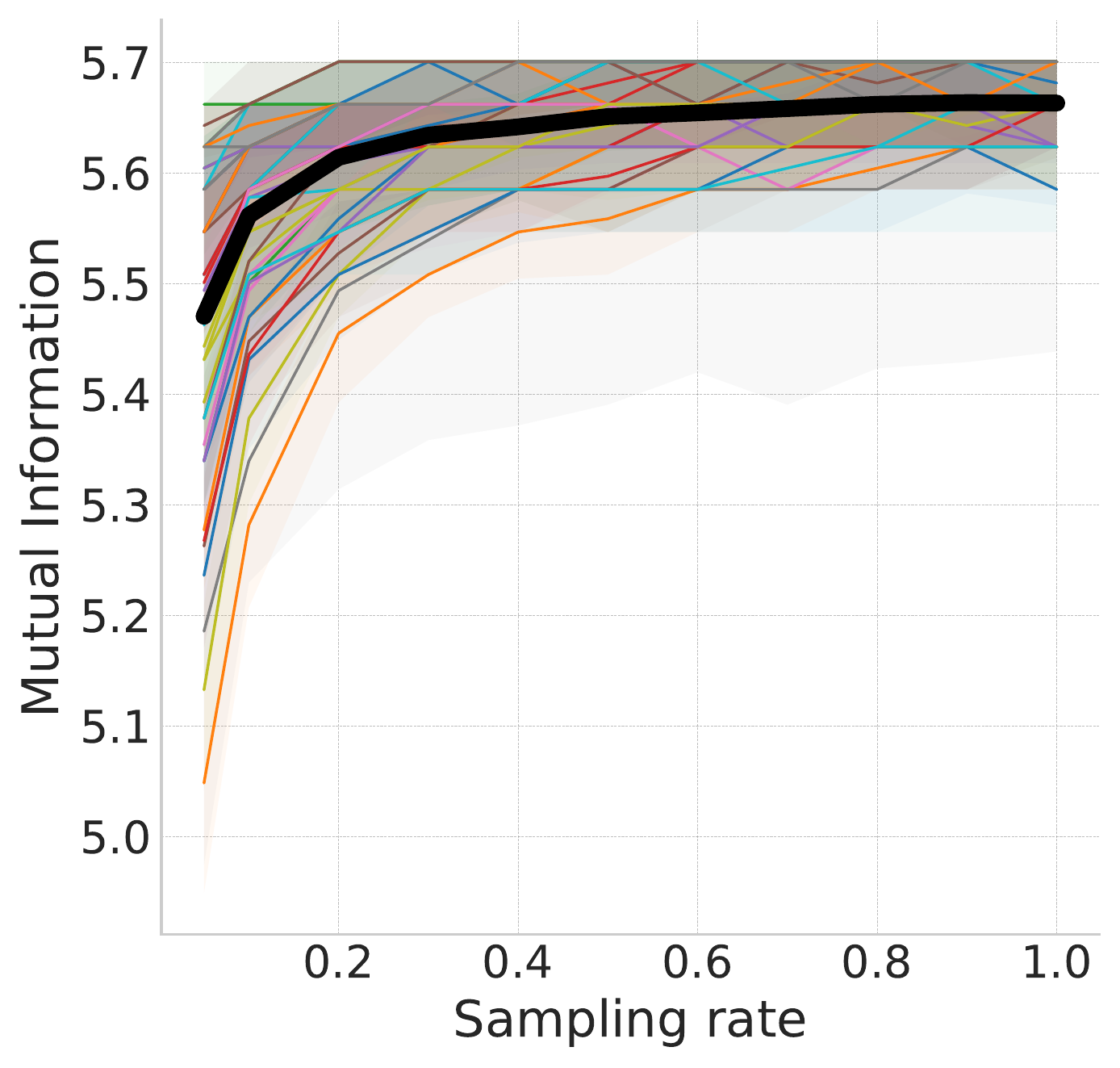}
\caption{Decay of mutual information with external signal due to sampling of the influenza activity.   
For each state, we selected $100$ random one-year time windows and calculated the median mutual information~\cite{pyinform_asu} between Google Flu trends and the influenza activity at different sampling rates. Shaded regions mark the inter-quartile ranges for each state; the solid line represents the average coefficient across all states.}
\label{fig:exp_epidemics_mi}
\end{center}
\end{figure}

\begin{figure}[H]
\begin{center}
\includegraphics[width=0.49\linewidth]{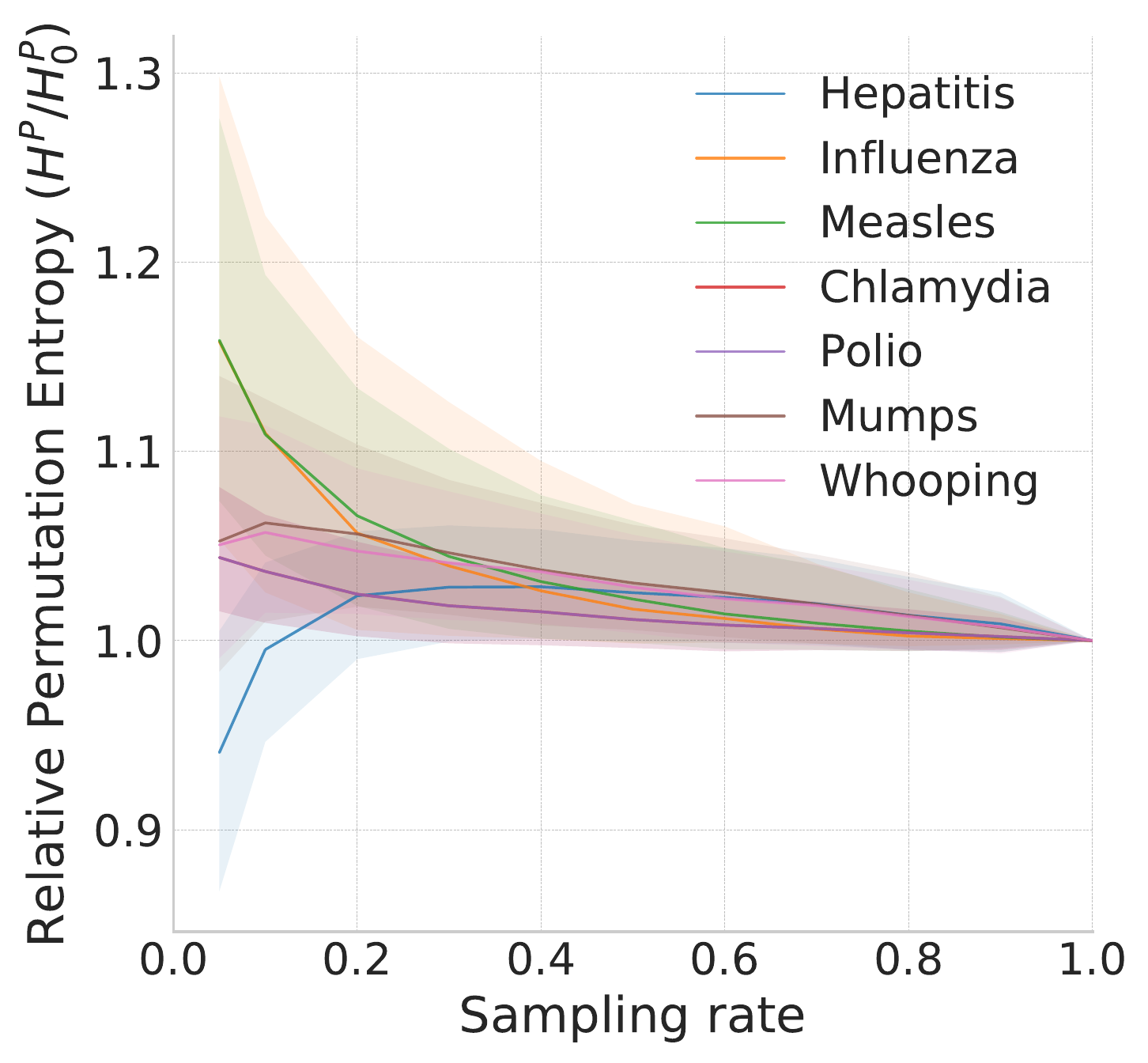}
\includegraphics[width=0.49\linewidth]{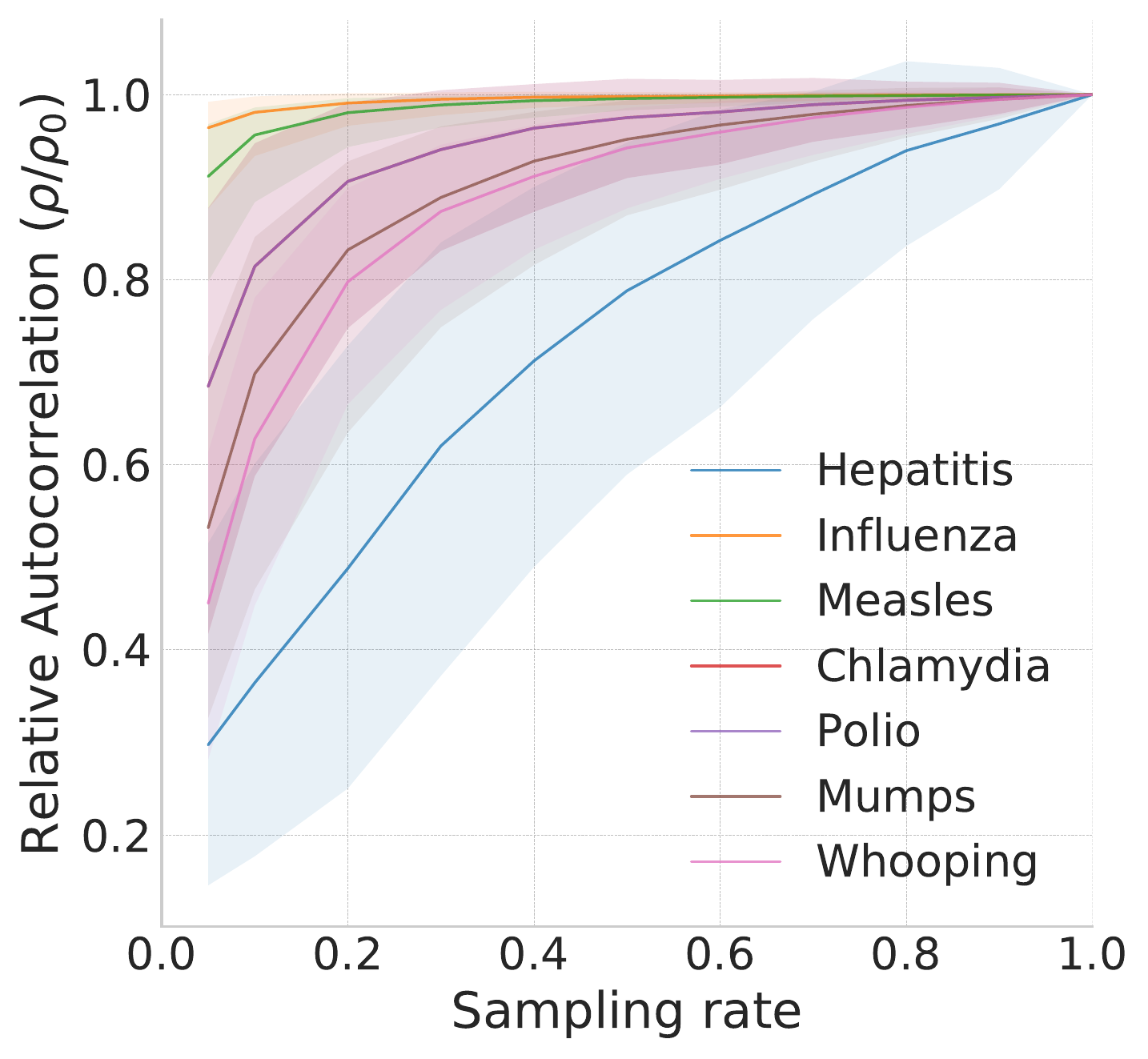} \\
\includegraphics[width=0.49\linewidth]{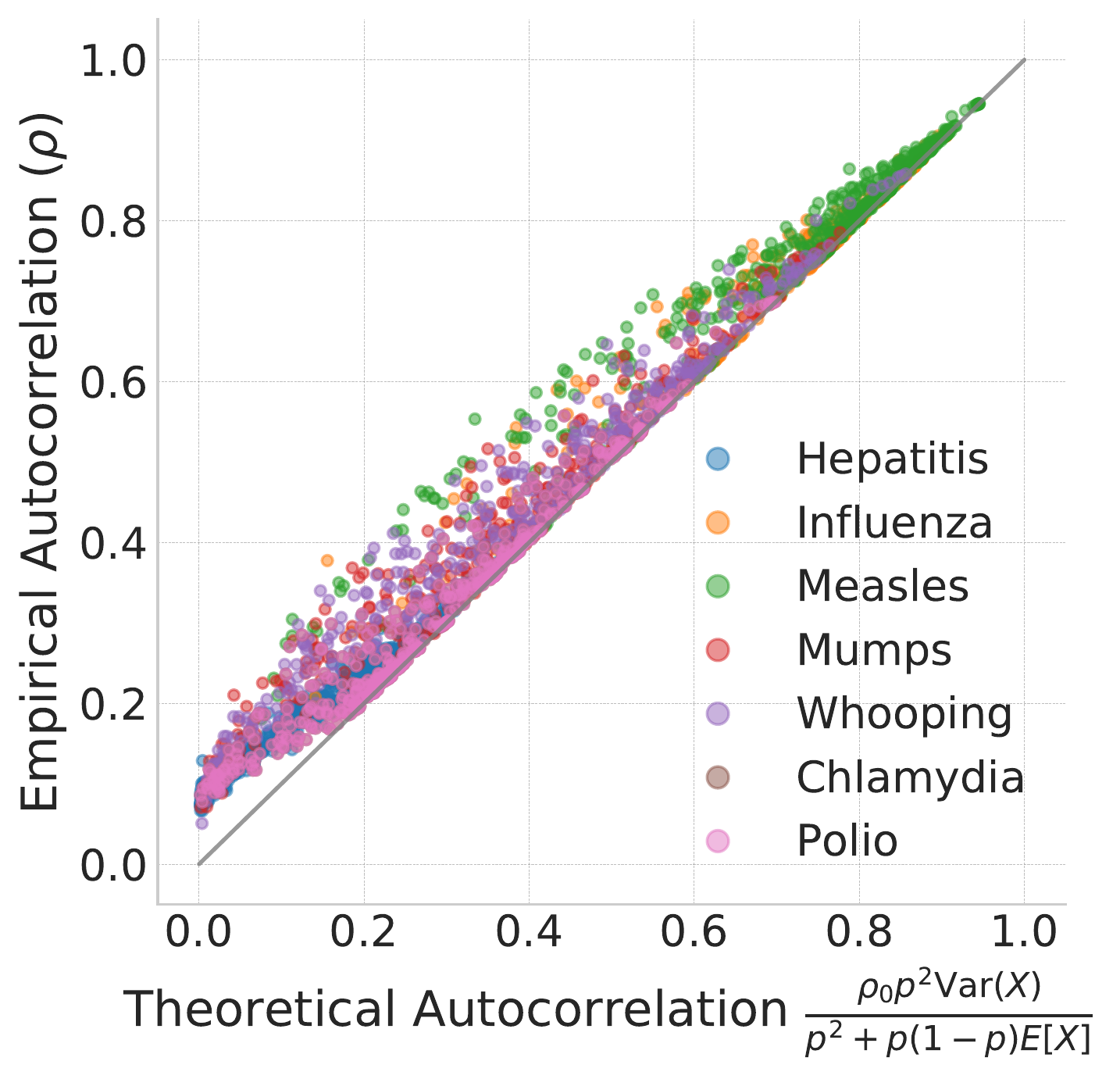}
\caption{Loss of predictability of disease outbreaks due to sampling.   The plots show a decrease in permutation entropy (\textbf{top-left}) and an increase in autocorrelation (\textbf{top-right}) of the outbreak time series for increasing sampling rates. For each of the eight 
diseases, we selected $100$ random two-year time windows and calculated the relative weighted permutation entropy  and autocorrelation for different sampling rates over that window. The solid line represents the median ratio across all states between the original time series and the sampled one; shaded regions mark the inter-quartile ranges.  The \textbf{bottom} plot supports our theoretical results by plotting Equation~\ref{eq:autocorr}  against the empirical autocorrelation of the sampled time series at different sampling rates for each disease. } 
\label{fig:exp_epidemics_w104}
\end{center}
\end{figure}

\newpage
\subsubsection*{Social Media.}

\begin{figure}[H]
\begin{center}
\centering
\includegraphics[width=0.45\columnwidth]{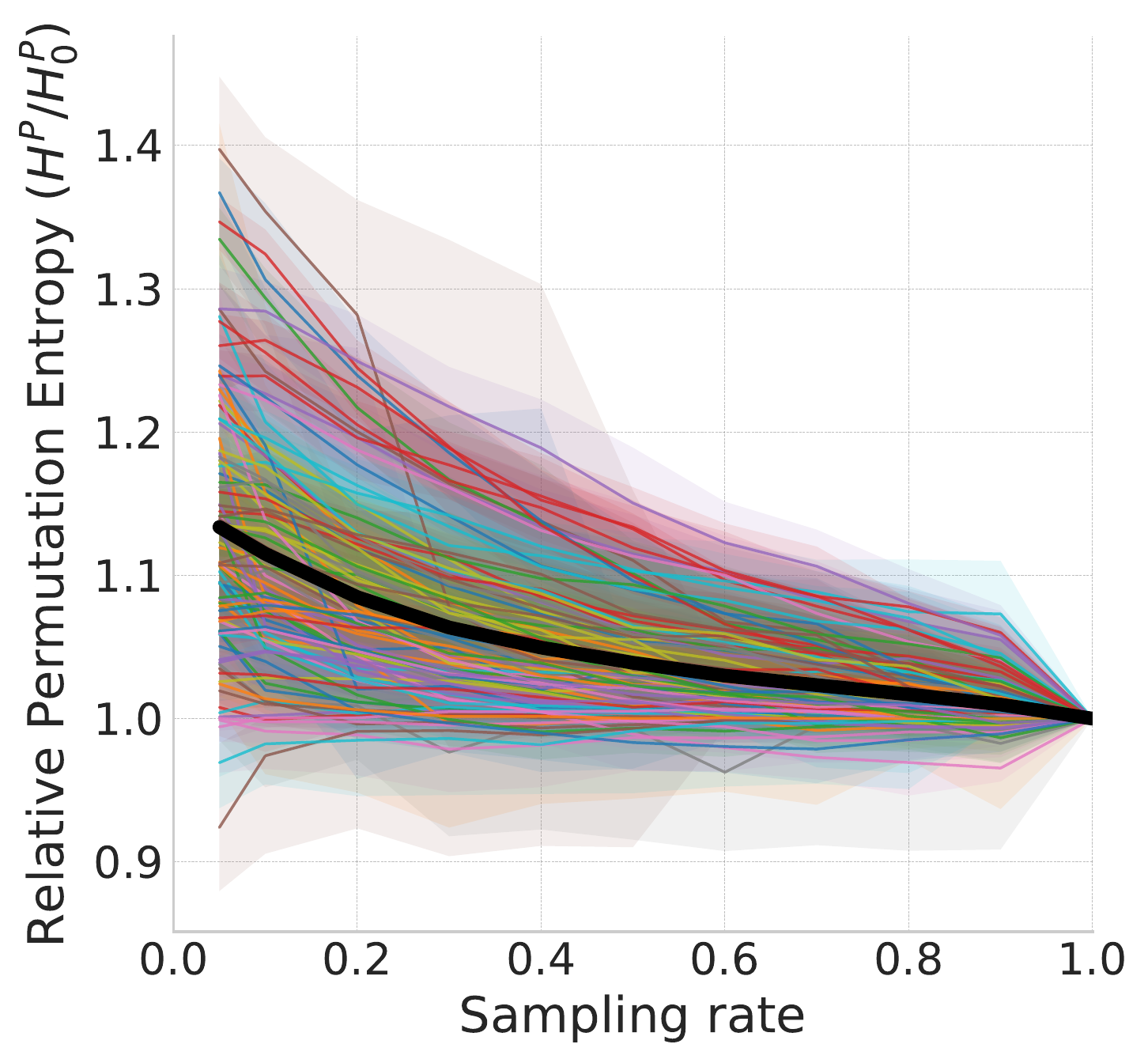}
\includegraphics[width=1.\columnwidth]{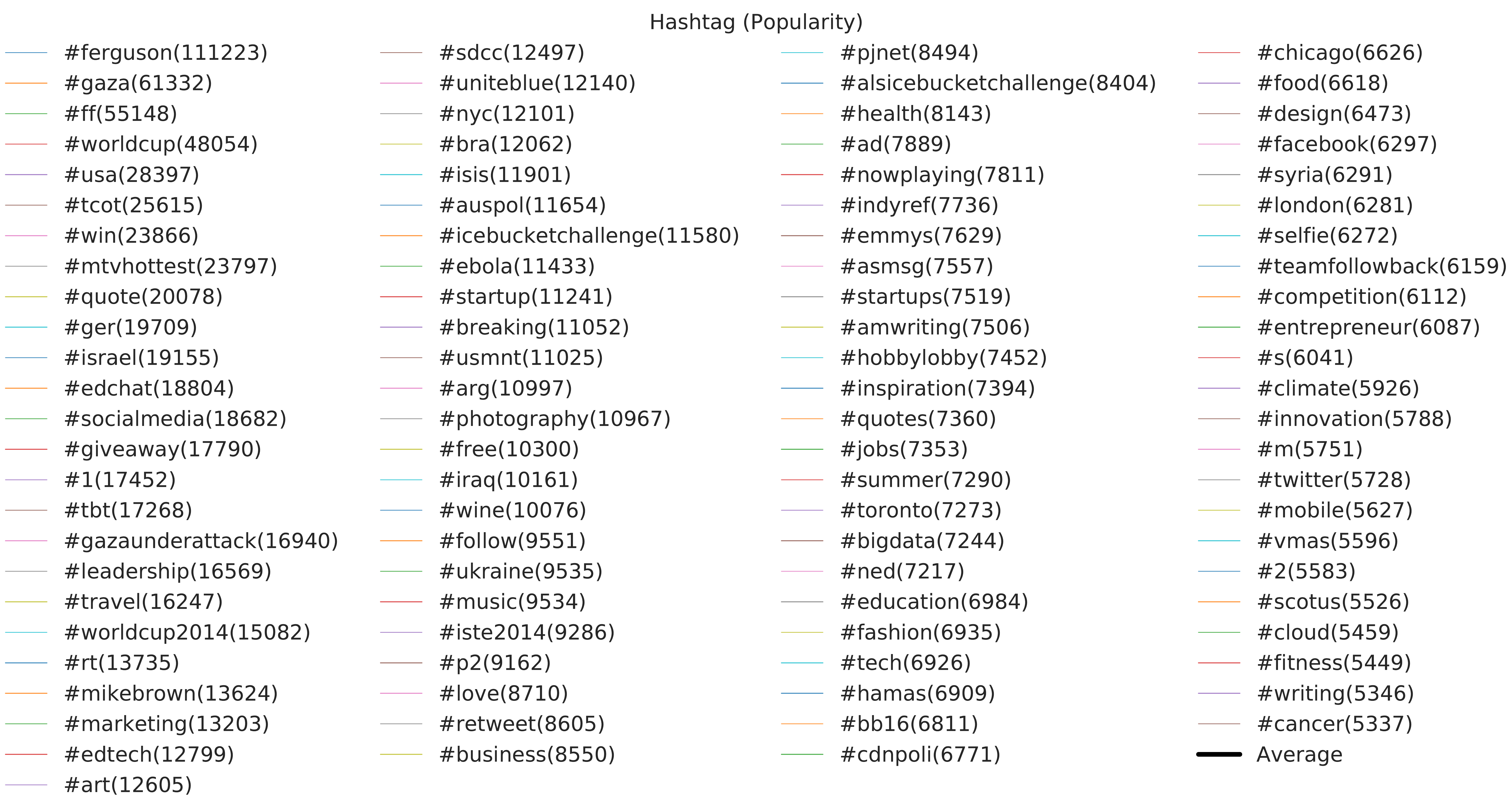}
\caption{Loss of predictability in social media due to sampling on user's activity (posts made by the user). The plot shows the median weighted permutation entropy relative to the original time series for each of the top 100 most popular hashtags. }
\label{fig:exp_twitter_entropy1}
\end{center}
\end{figure}

\begin{figure}[H]
\begin{center}
\centering
\includegraphics[width=0.45\columnwidth]{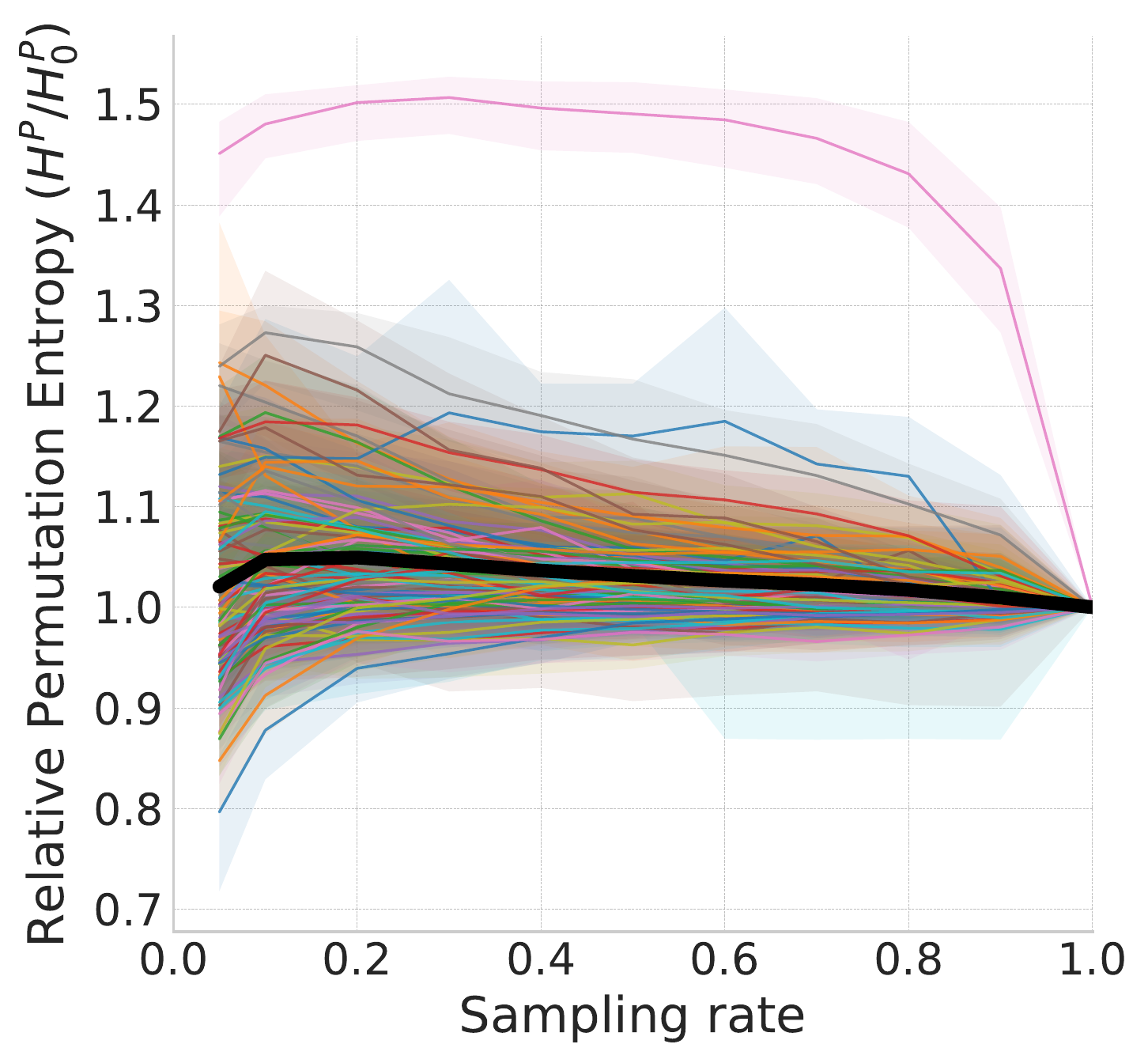}
\caption{Loss of predictability in social media due to sampling on user's activity (posts made by the user). The plot shows the median weighted permutation entropy relative to the original time series for each of the top 100 most active users.}
\label{fig:exp_twitter_entropy2}
\end{center}
\end{figure}

\begin{figure}[H]
\begin{center}
\centering
\includegraphics[width=0.8\columnwidth]{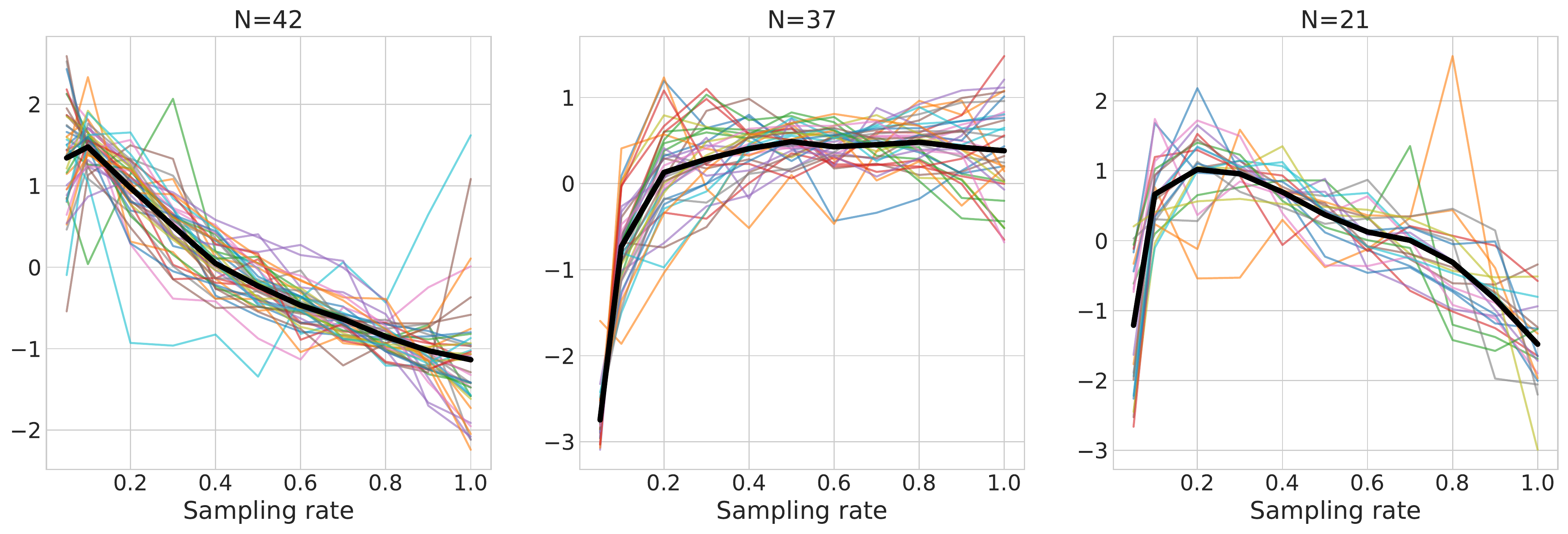}
\includegraphics[width=0.8\columnwidth]{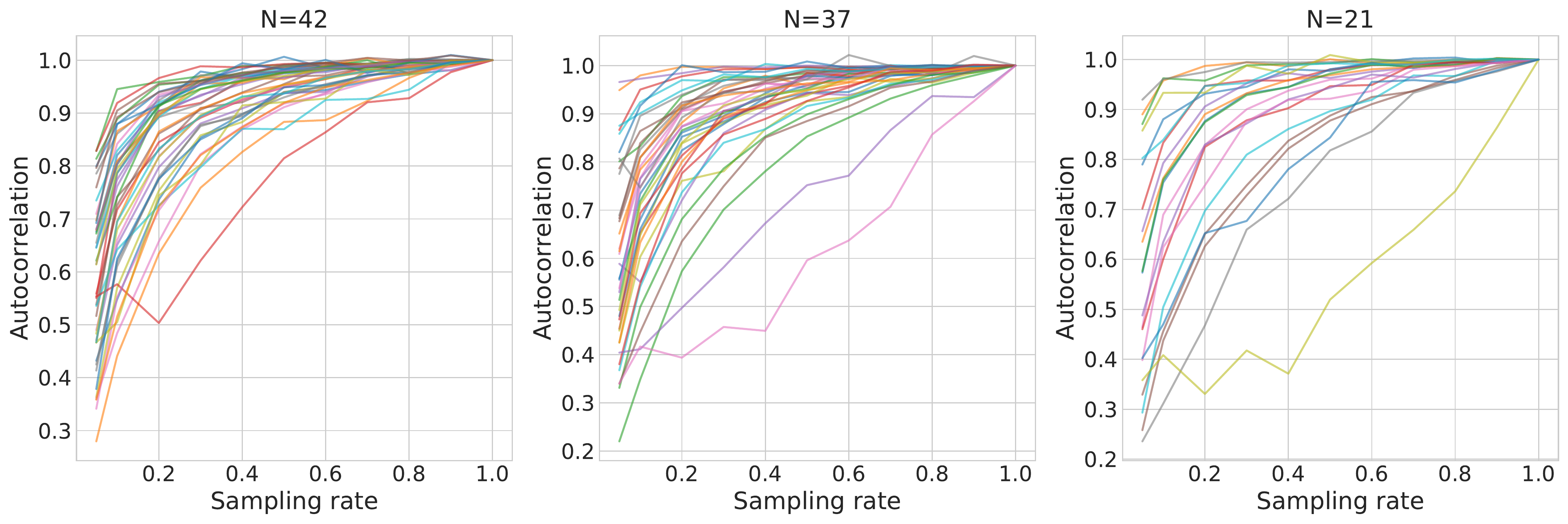}
\caption{Clustering the permutation entropy behavior from Figure~\ref{fig:exp_twitter_entropy2}. We use, 
K-means clustering to depict the three most characteristics types of behaviors exhibited when computing predictability in social media as a function of sampling on user's activity.}
\label{fig:exp_twitter_entropy_cluster}
\end{center}
\end{figure}

\begin{figure}[H]
\begin{center}
\centering
\includegraphics[width=0.4\columnwidth]{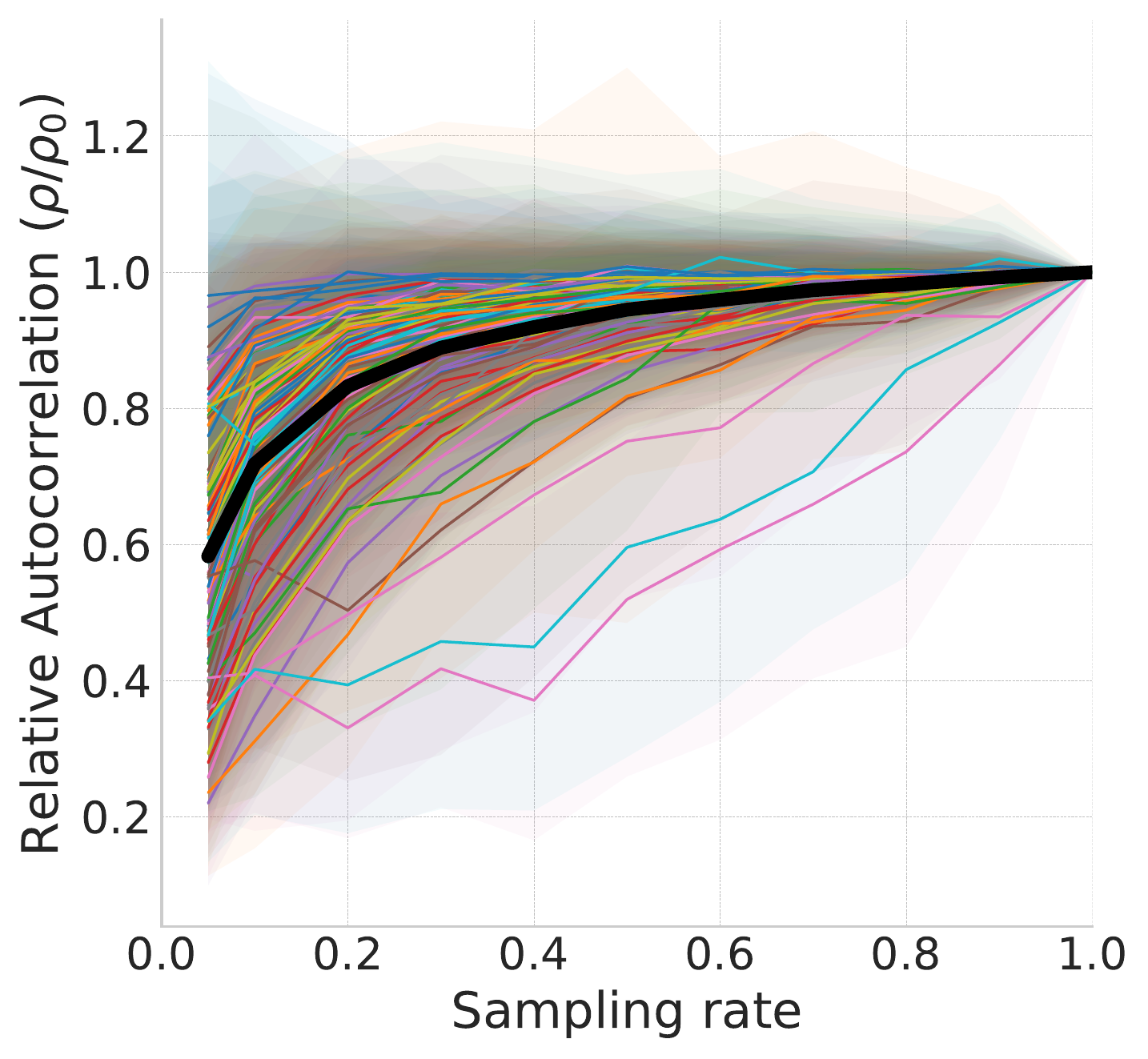}
\hspace{0.3cm}
\includegraphics[width=0.4\columnwidth]{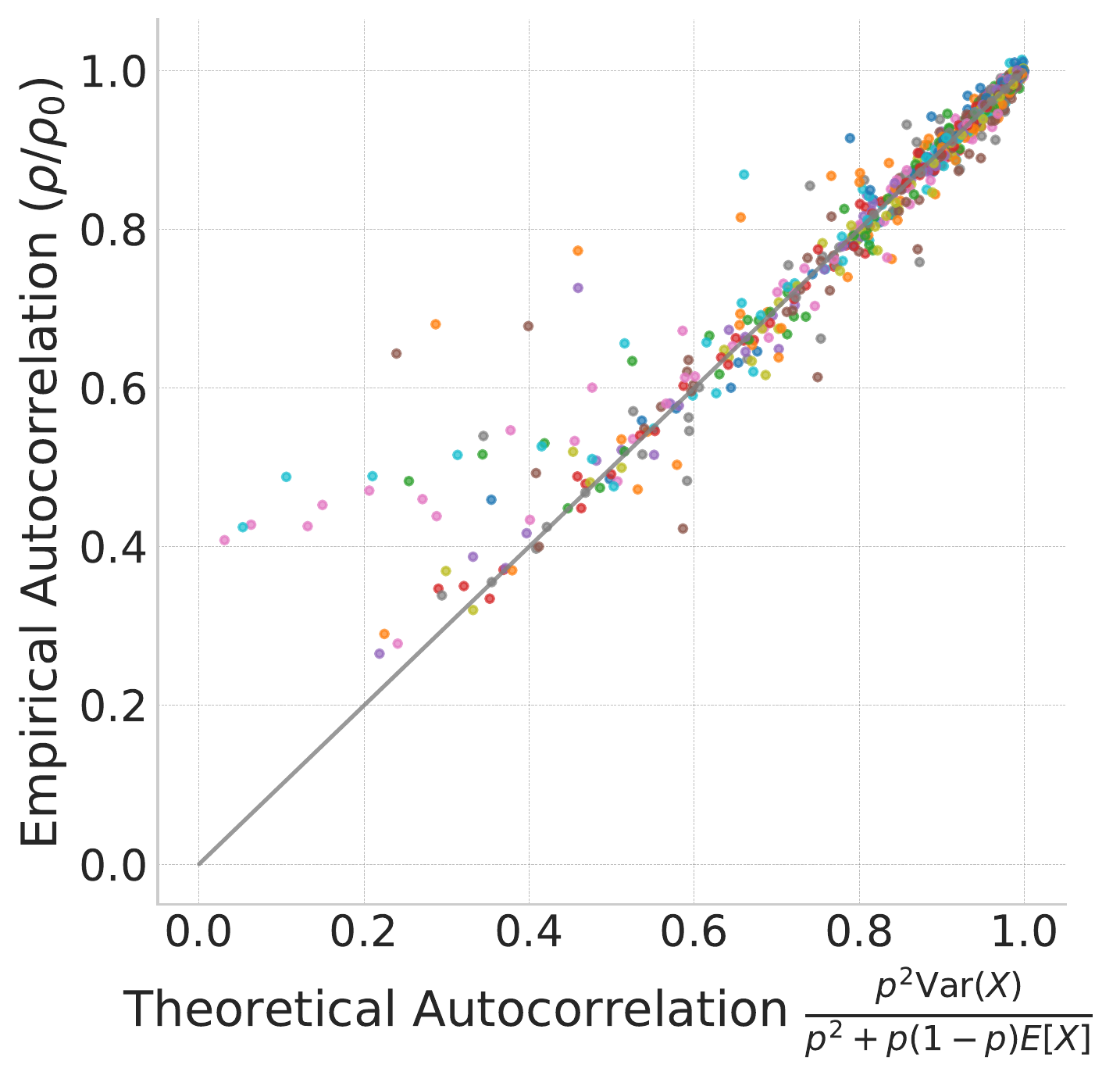}
\caption{Empirical and theoretical effects of sampling on autocorrelation of user's activity. The left plot shows the median autocorrelation relative to the original time series for each of the top 100 most active users; shaded region marks the interquartile ranges; the black line represents the average autocorrelation across all users.
The right plot shows the accuracy of the theoretical prediction according to Equation \ref{eq:autocorr}. The  line depicts the identity function to represent an accurate fit to the data. }
\label{fig:exp_user}
\end{center}
\end{figure}

\subsection*{Cryptocurrencies.}

\begin{center}
\begin{figure}[h]
\centering
\includegraphics[width=\columnwidth]{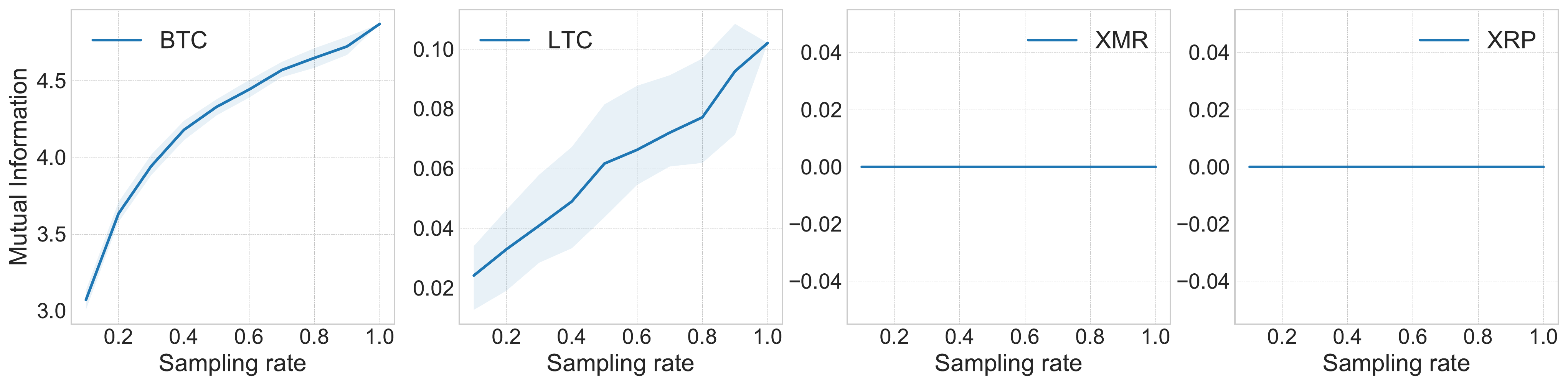}
\caption{Decay of mutual information between the popularity of cryptocurrencies repositories and their prices for different sampling rates.
For each cryptocurrency, and each sampling rate, we obtained $100$ samples, and calculated the median mutual information~\cite{pyinform_asu} between the price and the popularity of related Github repositories. The solid line represents the median mutual information for each coin. Shaded region marks the interquartile ranges for each coin.}
\label{fig:crypto_mi}
\end{figure}
\end{center}

\newpage
\subsection*{Supplementary Tables}

\begin{table}[h!]
\begin{center}
\begin{tabular}{ccccc}
\toprule 
Disease & Years & Total Infections & Infections/Year (SD) & Infections/week (SD) \tabularnewline
\midrule
Hepatitis & 1966-2014 & $742,554$ & $15,154$ ($18,905$)  & $14,011$ ($2,643$)  \tabularnewline
Influenza & 1919-1951 & $6,498,817$ & $196,934$ ($198,551$)  & $122,619$ ($163,348$)  \tabularnewline
Measles & 1909-2001 & $18,430,036$ & $198,172$ ($247,833$)  & $347,737$ ($310,744$)  \tabularnewline
Chlamydia & 2006-2014 & $4,882,110$ & $542,456$ ($80,216$)  & $92,115$ ($18,487$)  \tabularnewline
Polio & 1921-1971 & $505,246$ & $9,907$ ($13,134$)  & $9,532$ ($10,676$)  \tabularnewline
Gonorrhea & 1972-2014 & $3,701,913$ & $86,091$ ($210,930$)  & $69,847$ ($13,426$)  \tabularnewline
Mumps & 1967-2014 & $866,965$ & $18,061$ ($35,054$)  & $16,357$ ($9,180$)  \tabularnewline
Whooping cough & 1909-2014 & $2,220,008$ & $20,943$ ($47,835$)  & $41,887$ ($7,296$)  \tabularnewline
\bottomrule
\end{tabular}
\par\end{center}
\caption{Epidemics data: descriptive statistics.}
\label{si-epidemics}
\end{table}

\begin{table}[h!]
\begin{center}
\begin{tabular}{cccccc}
\toprule 
Hashtags & Users & Tweets & Tweets/hashtag (SD) & Tweets/User (SD) & Tweets/Day (SD)\tabularnewline
\midrule
$100$ & $233,108$ & $1,269,348$ & $12,693$ ($13,561$) & $5.4$ ($12$) & $11,135$ ($6,941$)\tabularnewline
\bottomrule
\end{tabular}
\par\end{center}
\caption{Twitter hashtag activity descriptive statistics. Top 100 most popular hashtags.}
\label{si-twitter-S1}
\end{table}

\begin{table}[h!]
\begin{center}
\begin{tabular}{cccccc}
\toprule 
Users & Tweets & Hashtags & Tweets/User (SD) & Tweets/hashtag (SD) & Tweets/Day (SD) \tabularnewline
\midrule
$150$ & $167,654$ & $20,990$ & $1,118$ ($13,561$) & $8.0$ ($41.6$) & $1,552$ ($1,668$)  \tabularnewline
\bottomrule
\end{tabular}
\par\end{center}
\caption{Twitter user activity descriptive statistics. Top 150 most active users.}
\label{si-twitter-S2}
\end{table}

\begin{table}[h!]
\begin{center}
\begin{tabular}{cccccc}
\toprule 
Cryptocurrency & Events & Repositories & Users & Events/Day (SD) & Events/Repo (SD)  \tabularnewline
\midrule
Bitcoin (BTC) & $40,038$ & $1,962$ & $5,324$ & $460$ ($116$)  & $20.4$ ($138.4$) \tabularnewline
Ripple (XRP) & $2,963$ & $7$ & $86$ & $35.3$ ($27.7$)  & $423.2$ ($1,115$) \tabularnewline
Litecoin (LTC) & $1,222$ & $137$ & $302$ & $14$ ($11$) & $8.9$ ($19.2$)  \tabularnewline
Monero (XMR) & $370$ & $15$ & $54$ & $5.7$ ($6.3$) & $24.7$ ($55.8$)  \tabularnewline
\bottomrule
\end{tabular}
\par\end{center}
\caption{Github data: descriptive statistics.}
\label{si-github}
\end{table}

\end{document}